\pgfplotsset{compat=newest}
\pgfplotsset{plot coordinates/math parser=false}
\newlength{\figwidth}
\newlength{\figheight}
\definecolor{gray1}{gray}{0.0}
\definecolor{gray2}{gray}{0.25}
\definecolor{gray3}{gray}{0.5}
\definecolor{gray4}{gray}{0.7}
\definecolor{gray5}{gray}{0.9}
\newlength{\prel}\setlength{\prel}{0.1cm} 
\pgfplotsset{
  title style = {font=\small},
}
\newcommand{\T}{\textsf{T}} 
\newcommand{\perm}{\text{Perm}} 
\newcommand{\permsc}{\mathrm{Perm}^{\text{\tiny{SC}}}} 
\renewcommand{\b}[1]{\pmb{#1}} 
\DeclareMathOperator*{\argmin}{arg\,min}  
\newcommand{\rev}[1]{#1} 
\begin{document}

\title{Symmetry Exploits for Bayesian Cubature Methods}

\author{Toni Karvonen \and Simo S\"{a}rkk\"{a} \and Chris. J. Oates}

\institute{T. Karvonen \at
              Department of Electrical Engineering and Automation \\
              Aalto University, Finland \\
              \email{toni.karvonen@aalto.fi}
           \and
           S. S\"{a}rkk\"{a} \at
              Department of Electrical Engineering and Automation \\
           Aalto University, Finland \\
          \email{simo.sarkka@aalto.fi}
		  \and
                  C. J. Oates \at
                  School of Mathematics, Statistics and Physics \\
          Newcastle University, UK \\
          \email{chris.oates@ncl.ac.uk}
}

\date{Received: date / Accepted: date}

\maketitle

\begin{abstract}
Bayesian cubature provides a flexible framework for numerical integration, in which {\it a priori} knowledge on the integrand can be encoded and exploited.
This additional flexibility, compared to many classical cubature methods, comes at a computational cost which is cubic in the number of evaluations of the integrand.
It has been recently observed that fully symmetric point sets can be exploited in order to reduce -- in some cases substantially -- the computational cost of the standard Bayesian cubature method.
This work identifies several additional symmetry exploits within the Bayesian cubature framework.
In particular, we go beyond earlier work in considering non-symmetric measures and, in addition to the standard Bayesian cubature method, present exploits for the Bayes--Sard cubature method and the multi-output Bayesian cubature method.

\keywords{probabilistic numerics \and numerical integration \and Gaussian processes \and fully symmetric sets} 
\end{abstract}

\section{Introduction}
 
This paper considers the numerical approximation of an integral
\begin{eqnarray*}
I(f^\dagger) \coloneqq \int_M f^\dagger(\b{x}) \mathrm{d}\nu(\b{x}),
\end{eqnarray*}
where $(M, \mathcal{B}, \nu)$ is a Borel probability space with $M$ any Borel measurable non-empty subset of $\mathbb{R}^m$ and $f^\dagger \colon M \rightarrow \mathbb{R}$ is a $\mathcal{B}$-measurable scalar-valued integrand (vector-valued integrands will be considered in Section~\ref{subsec: vvbc}). Additional assumptions will be made when necessary.
Our interest is in the situation where the exact values of $f^\dagger$ cannot be deduced until the function itself is evaluated, and that the evaluations are associated with a substantial computational cost or a very large number of them is required.
Such situations are typical in, for example, uncertainty quantification for chemical systems \cite{Najm2009}, fluid mechanical simulation \cite{Xiu2003} and certain financial applications \cite{Holtz2011}.

In \rev{the presence} of a limited computational budget, it is natural to exploit any contextual information that may be available on the integrand.
Classical cubatures, such as spline-based or Gaussian cubatures, are able to exploit abstract mathematical information, such as the number of continuous derivatives of the integrand \cite{Davis2007}.
However, in situations where more detailed or specific contextual information is available to the analyst, the use of generic classical cubatures can be sub-optimal.
 
The language of probabilities provides one mechanism in which contextual information about the integrand can be captured.
Let $(\Omega,\mathcal{F},\mathbb{P})$ be a probability space.
Then an analyst can elicit their prior information about the integrand $f^\dagger$ in the form of a stochastic process model
\begin{eqnarray}
\omega \mapsto f(\cdot \, ; \omega), \qquad \omega \in \Omega,
\label{eq: prior stochastic process}
\end{eqnarray}
wherein the function $\b{x} \mapsto f(\b{x} ; \omega)$ is $\mathcal{B}$-measurable for each fixed $\omega \in \Omega$.
Through the stochastic process, the analyst can encode both abstract mathematical information, such as the number of continuous derivatives of the integrand, and specific contextual information, such as the possibility of a trend or a periodic component.
The process of elicitation is not discussed in this work (see~\cite{Diaconis1988,Hennig2015}); for our purposes the stochastic process in~\eqref{eq: prior stochastic process} is considered to be provided.
 
In Bayesian cubature methods, due to Larkin~\cite{Larkin1972} and re-discovered in \cite{Diaconis1988,OHagan1991,Minka2000}, the analyst first selects a point set $X = \{\b{x}_i\}_{i=1}^N \subset M$, $N \in \mathbb{N}$, on which the true integrand $f^\dagger$ is evaluated.
Let this data be denoted $\mathcal{D} = \{(\b{x}_i,f^\dagger(\b{x}_i))\}_{i=1}^N$.
Then the analyst conditions their stochastic process according to these data $\mathcal{D}$, to obtain a second stochastic process
\begin{eqnarray*}
\omega \mapsto f_N(\cdot \, ; \omega) .
\end{eqnarray*}
The analyst reports the implied distribution over the value of the integral of interest; that is the law of the random variable
\begin{eqnarray*}
\omega \mapsto \int_M f_N(\b{x} ; \omega) \mathrm{d}\nu(\b{x}) .
\end{eqnarray*}
This distribution can be computed in closed form under certain assumptions on the structure of the prior model.
A sufficient condition is that the stochastic process is Gaussian, which (arguably) does not severely restrict the analyst in terms of what contextual information can be included \cite{Rasmussen2006}.
In addition, the probabilistic output of the method enables uncertainty quantification for the unknown true value of the integral~\mbox{\cite{Larkin1972,Cockayne2017,Briol2017}}.
These appealing properties have led to Bayesian cubature methods being used in diverse areas such as from computer graphics \cite{Marques2013}, non-linear filtering \cite{Prueher2017} and applied Bayesian statistics~\cite{Osborne2012}.
 
The theoretical aspects of Bayesian cubature methods have now been widely-studied.
In particular, convergence of the posterior mean point estimator
\begin{eqnarray}
\int_\Omega \int_M f_N(\b{x} ; \omega) \mathrm{d}\nu(\b{x}) \mathrm{d}\mathbb{P}(\omega) \rightarrow \int_M f^\dagger(\b{x}) \mathrm{d}\nu(\b{x})
\label{eq: posterior mean}
\end{eqnarray}
as $N \rightarrow \infty$ has been studied in both the well-specified \cite{Bezhaev1991,SommarivaVianello2006,Briol2015,Ehler2018,Briol2017} and mis-specified \cite{Kanagawa2016,Kanagawa2017} regimes.
Some relationships between the posterior mean estimator and classical cubature methods have been documented in \cite{Diaconis1988,Sarkka2016,Karvonen2017a}.
In \cite{Larkin1974,OHagan1991,Karvonen2018c} the {\it Bayes--Sard} framework was studied, where it was proposed to incorporate an explicit parametric component~\cite{OHagan1978} into the prior model in order that contextual information, such as trends, can be properly encoded.
The choice of point set $X$ for Bayesian cubature has been studied in \cite{Briol2015,Bach2017,Briol2017a,Oettershagen2017,Chen2018,Pronzato2018}.
In addition, several extensions have been considered to address specific technical challenges posed by non-negative integrands~\cite{Chai2018}, model evidence integrals in a Bayesian context~\cite{Osborne2012,Gunter2014}, ratios~\cite{Osborne2012a}, non-Gaussian prior models~\cite{Kennedy1998,Prueher2017}, measures that can be only be sampled~\cite{Oates2017}, and vector-valued integrands~\cite{Xi2018}.

Despite these recent successes, a significant drawback of Bayesian cubature methods is that the cost of computing the distributional output is typically cubic in $N$, the size of the point set.
For integrals whose domain $M$ is high-dimensional, the number $N$ of points required can be exponential in $m = \text{dim}(M)$.
Thus the cubic cost associated with Bayesian cubature methods can render them impractical.
In recent work, Karvonen and Särkkä~\cite{Karvonen2018} noted that symmetric structure in the point set can be exploited to reduce the total computational cost.
Indeed, in some cases the exponential dependence on $m$ can be reduced to (approximately) linear.
This is a similar effect to that achieved in the circulant embedding approach \cite{Dietrich1997}, or by the use of $\mathcal{H}$-matrices \cite{Hackbusch1999} and related approximations \cite{Schaefer2017}, though the approaches differ at a fundamental level.
The aim of this paper is to present several related \emph{symmetry exploits} that are specifically designed to reduce computational cost of Bayesian cubature methods.

Our principal contributions are following:
\emph{First}, the techniques developed in \cite{Karvonen2018} are extended to the Bayes--Sard cubature method.
This results in a computational method that is, essentially, of the complexity \sloppy{${\mathcal{O}(J^3 + JN)}$}, where $J$ is the number of symmetric sets that constitute the full point set, instead of being cubic in $N$.
In typical scenarios there are at most a few hundred symmetric sets even though the total number of points can go up to millions.
\emph{Second}, we present an extension to the multi-output (i.e., vector-valued) Bayesian cubature method that is used to simultaneously integrate $D \in \mathbb{N}$ related integrals.
In this case, the computational complexity is reduced from $\mathcal{O}(D^3 N^3)$ to $\mathcal{O}(D^3 J^3 + DJN)$.
\emph{Third}, a symmetric change of measure technique is proposed to avoid the (strong) assumption of symmetry on the measure $\nu$ that was required in~\cite{Karvonen2018}.
\emph{Fourth}, the performance of our techniques is empirically explored.
Throughout, our focus is not on the performance of these integration methods, which has been explored in earlier work, already cited. 
Rather, our focus is on how computation for these methods can be accelerated.

The remainder of the article is structured as follows:
Section~\ref{sec:background} covers the essential background for Bayesian cubature methods and introduces fully symmetric sets that are used in the symmetry exploits throughout the article.
Sections~\ref{sec:bsc-fss} and~\ref{subsec: vvbc} develop fully symmetric Bayes--Sard cubature and fully symmetric multi-output Bayesian cubature.
Section~\ref{sec:is-trick} explains how the assumption that $\nu$ is symmetric can be relaxed.
In Section~\ref{sec:results} a detailed selection of empirical results are presented.
Finally, some concluding remarks and discussion are contained in Section~\ref{sec:conclusion}.

\section{Background} \label{sec:background}

This section reviews the standard Bayesian cubature method, due to Larkin~\cite{Larkin1972}, and explains how fully symmetric sets can be used to alleviate its computational cost, as proposed in \cite{Karvonen2018}.

\subsection{Standard Bayesian Cubature} \label{subsec: SBC}

In this section we present explicit formulae for the Bayesian cubature method in the case where the prior model~\eqref{eq: prior stochastic process} is a Gaussian random field.
To simplify the notation, Sections~\ref{sec:background} and~\ref{sec:bsc-fss} assume that the integrand has scalar output (i.e. $D = 1$); this is then extended to vector-valued output in Section~\ref{subsec: vvbc}.

To reduce the notational overhead, in what follows the $\omega \in \Omega$ argument is left implicit.
Thus we consider $f(\b{x})$ to be a scalar-valued random variable for each $\b{x} \in M$.
In particular, in this paper we focus on stochastic processes that are Gaussian, meaning that there exists a \emph{mean function} $m : M \rightarrow \mathbb{R}$ and a symmetric positive definite \emph{covariance function} (or \emph{kernel}) $k : M \times M \rightarrow \mathbb{R}$ such that $[f(\b{x}_1), \ldots, f(\b{x}_N)]^\T \in \mathbb{R}^N$ has the multivariate Gaussian distribution
\begin{equation*}
\mathrm{N}\left( \begin{bmatrix} m(\b{x}_1) \\ \vdots \\ m(\b{x}_N) \end{bmatrix}, \begin{bmatrix} k(\b{x}_1, \b{x}_1) & \cdots & k(\b{x}_1,\b{x}_N) \\ \vdots & \ddots & \vdots \\ k(\b{x}_N,\b{x}_1) & \cdots & k(\b{x}_N,\b{x}_N) \end{bmatrix} \right)
\end{equation*}
for any $N \in \mathbb{N}$ and all point sets $\{\b{x}_i\}_{i=1}^N \subset M$. We assume that $\int_M k(\b{x}, \b{x}) \dif \nu(\b{x}) < \infty$. 

The conditional distribution $f_N$ of this field, based on the data $\mathcal{D} = \{(\b{x}_i,f^\dagger(\b{x}_i)\}_{i=1}^N$ of function evaluations at the points $X = \{\b{x}_i\}_{i=1}^N$, is also Gaussian, with mean and covariance functions
\begin{align}
m_N(\b{x}) & = m(\b{x}) + \b{k}_X(\b{x})^\T \b{K}_{X}^{-1} (\b{f}^\dagger_X - \b{m}_X), \label{eqn:GP-mean} \\
k_N(\b{x},\b{x}') & = k(\b{x},\b{x}') - \b{k}_X(\b{x})^\T \b{K}_{X}^{-1} \b{k}_X(\b{x}'), \label{eqn:GP-var}
\end{align}
where the vector $\b{f}^\dagger_X \in \mathbb{R}^N$ contains evaluations of the integrand, $[\b{f}^\dagger_X]_i = f^\dagger(\b{x}_i)$, the vector $\b{m}_X \in \mathbb{R}^N$ contains evaluations of the prior mean, $[\b{m}_X]_i = m(\b{x}_i)$, the vector $\b{k}_X(\b{x}) \in \mathbb{R}^N$ contains evaluations of the kernel, $[\b{k}_X(\b{x})]_i = k(\b{x},\b{x}_i)$, and $\b{K}_X  = \b{K}_{X,X} \in \mathbb{R}^{N \times N}$ is the \emph{kernel matrix}, $[\b{K}_X]_{ij} = k(\b{x}_i,\b{x}_j)$.
From the fact that linear functionals of Gaussian processes are Gaussian, we obtain that
\begin{eqnarray}
\int_M f_N(\b{x}) \dif \nu(\b{x}) \sim \mathrm{N}\big( \mu_N(f^\dagger) , \sigma_N^2 \big) \label{eq: BC output},
\end{eqnarray}
with
\begin{align}
\mu_N(f^\dagger) ={}& I(m) + \b{k}_{\nu,X}^\T \b{K}_{X}^{-1} (\b{f}^\dagger_X - \b{m}_X), \label{eq: BC mean} \\
\sigma_N^2 ={}& k_{\nu,\nu} - \b{k}_{\nu,X}^\T \b{K}_{X}^{-1} \b{k}_{\nu,X}. \label{eq: BC var} 
\end{align}
Here $k_\nu(\b{x}) := \int_M k(\b{x},\b{x}') \dif \nu(\b{x}')$ is called the \emph{kernel mean} function \cite{Smola2007} and $\b{k}_{\nu,X} \in \mathbb{R}^N$ is the column vector with $[\b{k}_{\nu,X}]_i = k_\nu(\b{x}_i)$, whilst $k_{\nu,\nu} := \int_M k_\nu(\b{x}) \dif \nu(\b{x}) \geq 0$ is the variance of the integral itself under the prior model.
The assumption $\int_M k(\b{x}, \b{x}) \dif \nu(\b{x}) < \infty$ guarantees that the kernel mean is finite.
This method is known as the \emph{standard Bayesian cubature}, with the implicit understanding that the model for the integrand should be carefully selected to ensure~\eqref{eq: BC output} is well-calibrated \cite{Briol2017}, meaning that the uncertainty assessment can be trusted.
The need for careful calibration is in line with standard approaches to the Gaussian process regression task \cite{Rasmussen2006}.

To understand when the Bayesian cubature output is meaningful, it is useful to write the posterior mean and variance~\eqref{eq: BC mean} and~\eqref{eq: BC var} in terms of the \emph{weight} vector
\begin{equation}\label{eqn:BC-weights}
\b{w}_X \coloneqq \b{K}_X^{-1} \b{k}_{\nu,X}.
\end{equation}
That is, we have $\mu_N(f^\dagger) = I(m) + \b{w}_X^\T (\b{f}_X^\dagger - \b{m}_X)$ and $\sigma_N^2 = k_{\nu,\nu} - \b{w}_X^\T \b{k}_{\nu,X}$.
Let $\mathcal{H}(k)$ be the Hilbert space reproduced by the kernel $k$ (see \cite{Berlinet2011} for background).
It can then be verified that $\b{w}_X$ solves a quadratic minimisation problem of approximating $k_\nu$ with a function from the finite-dimensional space spanned by \sloppy{${\{k(\cdot, \b{x})\}_{\b{x} \in X} \subset \mathcal{H}(k)}$}, namely:
\begin{equation*}
\b{w}_X = \argmin_{\b{w} \in \mathbb{R}^N} \bigg\| k_\nu(\cdot) - \sum_{i=1}^N w_i k(\cdot, \b{x}_i) \bigg\|_{\mathcal{H}(k)}.
\end{equation*}
and that the minimum the value of this norm is $\sigma_X$ (see e.g. \cite[Ch.\ 3]{Oettershagen2017} and \cite{Bach2012}).
Equivalently, the weight vector can be obtained as the minimiser of the worst case error 
\begin{eqnarray*}
\sup_{\|f^\dagger\|_{\mathcal{H}(k)} \leq 1} \abs[3]{ \int_M f^\dagger(\b{x}) \dif \nu(\b{x}) - \sum_{i=1}^N w_i f^\dagger(\b{x}_i) }
\end{eqnarray*}
among all cubature rules with points $X$, with $\sigma_N$ corresponding to the minimal worst case error \cite{Briol2017,Oettershagen2017}.
Thus, in terms of uncertainty quantification, the posterior standard deviation $\sigma_X$ can indeed be meaningfully related to the integration problem being solved.

The principal motivation for this work is the observation that both~\eqref{eq: BC mean} and~\eqref{eq: BC var} involve the solution of an $N$-dimensional linear system defined by the matrix $\b{K}_{X}$.
In general this is a dense matrix and, as such, in the absence of additional structure in the linear system~\cite{Karvonen2018} or further approximations (e.g. \cite{LazaroGredilla2010,Hensman2018,Schaefer2017}), the computational complexity associated with the standard Bayesian cubature method is $\mathcal{O}(N^3)$.
Moreover, it is often the case that $\b{K}_X$ is ill-conditioned \cite{Schaback1995,Stein2012}.
The exploitation of symmetric structure to circumvent the solution of a large and ill-conditioned linear system would render Bayesian cubature more practical, in the sense of computational efficiency and numerical robustness; this is the contribution of the present article.

\subsection{Symmetry Properties}

\begin{figure}[t]
\centering
  \begin{subfigure}[b]{0.49\columnwidth}
  \centering
  \includegraphics{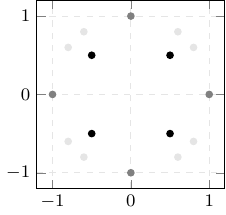}
  \end{subfigure}
  \begin{subfigure}[b]{0.49\columnwidth}
  \centering
  \includegraphics{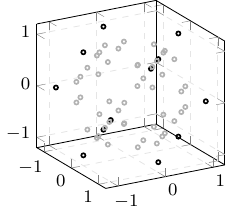}
  \end{subfigure}
  \caption{Fully symmetric sets generated by the vectors $[0.5,0.5]$, $[1,0]$, and $[0.6,0.8]$ in $\mathbb{R}^2$ (left) and $[1,1,0]$ and $[0.2,0.6,0.8]$ in $\mathbb{R}^3$ (right).}\label{fig:fss-example}
\end{figure}

Next we introduce fully symmetric sets and related symmetry concepts, before explaining in Section~\ref{sec:bc-fss} how these can be exploited for computational simplification in the standard Bayesian cubature method.
Note that, in what follows, no symmetry properties are needed for the integrand $f^\dagger$ itself.

\subsubsection{Fully Symmetric Point Sets}

Given a vector $\b{\lambda} \in \mathbb{R}^m$, the \emph{fully symmetric set} $[\b{\lambda}] \subset \mathbb{R}^m$ generated by this vector is defined as the point set consisting of all vectors that can be obtained from $\b{\lambda}$ via coordinate permutations and sign changes.
That is,
\begin{equation*}
\begin{split}
[\b{\lambda}] &= [\lambda_1,\ldots,\lambda_d] \\
&\coloneqq \bigcup_{q \in \Pi_m} \bigcup_{\b{s} \in S_m} \big\{ (s_1 \lambda_{q_1}, \ldots, s_d \lambda_{q_d} \big\} \subset \mathbb{R}^m,
\end{split}
\end{equation*}
where $\Pi_m$ and $S_m$ stand for the collections of all permutations of the first $m$ positive integers and of all vectors of the form $\b{s} = (s_1,\ldots,s_m)$ with each $s_i$ either $1$ or $-1$.
Here $\b{\lambda}$ is called a \emph{generator vector} and its individual elements are called \emph{generators}.
Alternatively, we can write the fully symmetric set in terms of permutation and sign change matrices:
\begin{equation*}
[\b{\lambda}] = \bigcup_{\b{P} \in \permsc_m} \b{P} \b{\lambda},
\end{equation*}
where $\permsc_m$ is the collection of $m \times m$ matrices having exactly one non-zero element on each row and column, this element being either $1$ or $-1$.
Some fully symmetric sets are displayed in Figure~\ref{fig:fss-example}.
The cardinality of a fully symmetric set $[\b{\lambda}]$, generated by a generator vector $\b{\lambda}$ containing $r_0$ zero generators and $l$ distinct non-zero generators with multiplicities $r_1,\ldots,r_l$, is
\begin{equation}\label{eq:fss-size}
\#[\b{\lambda}] = \frac{2^{m-r_0}d!}{r_0! \cdots r_l!}.
\end{equation}
See Table \ref{table:sizes} for a number of examples in low dimensions.

For $\b{\lambda} \in \mathbb{R}^m$ having non-negative elements, we occasionally need the concept of a \emph{non-negative} fully symmetric set
\begin{equation*}
[\b{\lambda}]^+ \coloneqq \bigcup_{\b{P} \in \perm_m} \b{P}\b{\lambda} \subset \mathbb{R}^m_+,
\end{equation*}
where $\perm_m \subset \permsc_m$ is the collection of $m \times m$ permutation matrices.

\begin{table}[t]
\begin{center}
\captionof{table}{Sizes of fully symmetric sets generated by the generator vector $\b{\lambda} = (\lambda_1,\ldots,\lambda_l,0,\ldots,0)$ having $l \leq m$ distinct non-zero elements $\lambda_1,\ldots,\lambda_l$ (see~\eqref{eq:fss-size}).
}\label{table:sizes}
\small
\caption*{\small{\textbf{Dimension} ($m$)}}\vspace{-0.3cm}
\begin{tabular}[t!]{c|c c c c c c c c c c}
\arrayrulecolor{lightgray}
 & 2 & 3 & 4 & 5 & 6 & 7 \\
\Xhline{1pt}
$l=1$ & 4 & 6 & 8 & 10 & 12 & 14\\ \hline
$l=2$ & 8 & 24 & 48 & 80 & 120 & 168\\ \hline
$l=3$ & - & 48 & 192 & 480 & 960 & 1,680\\ \hline
$l=4$ & - & - & 384 & 1,920 & 5,760 & 13,440\\ \hline
$l=5$ & - & - & - & 3,840 & 23,040 & 80,640\\ \hline
$l=6$ & - & - & - & - & 46,080 & 322,560\\ \hline
$l=7$ & - & - & - & - & - & 645,120\\
\end{tabular}
\end{center}
\end{table}

\subsubsection{Fully Symmetric Domains, Kernels, and Measures} \label{sec:fss-objects}

At this point we introduce several related definitions; these enable us later to state precisely which symmetry assumptions are being exploited.

\paragraph{Domains.} It will be assumed in the sequel that \sloppy{${M \subset \mathbb{R}^m}$} is a \emph{fully symmetric domain}, meaning that every fully symmetric set generated by a vector from $M$ is contained in $M$: $[\b{\lambda}] \subset M$ whenever $\b{\lambda} \in M$.
Equivalently, \sloppy{${M = \b{P}M = \{ \b{P} \b{x} \, \colon, \b{x} \in M \}}$} for any $\b{P} \in \permsc_m$.
Most popular domains, such as the whole of $\mathbb{R}^m$, hypercubes of the form $[-a, a]^m$ (from which e.g. the unit hypercube can be obtained by simple translation and scaling), balls and spheres, are fully symmetric.

\paragraph{Kernels.} A kernel $k \colon M \times M \to \mathbb{R}$ defined on a fully symmetric domain $M$ is said to be a \emph{fully symmetric kernel} if $k(\b{P}\b{x},\b{P}\b{x}') = k(\b{x},\b{x}')$ for any $\b{P} \in \permsc_m$.
Basic examples of fully symmetric kernels include isotropic kernels and products and sums of isotropic one-dimensional kernels.

\paragraph{Measures.} A measure $\nu$ on a fully symmetric domain $M$ is a \emph{fully symmetric measure} \rev{if it is invariant under fully symmetric pushforwards: $\b{P}_*(\nu) = \nu$ for any \sloppy{${\b{P} \in \permsc_m}$}.}
If $\nu$ admits a Lebesgue density $p_\nu$, this condition is equivalent to $p_\nu(\b{x}) = p_\nu(\b{P}\b{x})$ for any $\b{P} \in \permsc_m$.
Note that this is a narrow class of measures and a relaxation of this assumption is discussed in Section~\ref{sec:is-trick}.

\subsubsection{Fully Symmetric Cubature Rules}

The linear functional $\mu(f^\dagger) = \sum_{i=1}^N w_i f^\dagger(\b{x}_i)$ is said to be \emph{fully symmetric cubature rule} if its point set can be written as a union of a number $J \in \mathbb{N}$ of fully symmetric sets $[\b{\lambda}^1], \ldots, [\b{\lambda}^J]$ and all points in each $[\b{\lambda}^j]$ are assigned an equal weight.
That is, a fully symmetric cubature rule is of the form
\begin{equation*}
\mu(f^\dagger) = \sum_{j=1}^J w^{\text{\tiny{FS}}}_j \sum_{\b{x} \in [\b{\lambda}^j]} f^\dagger(\b{x})
\end{equation*}
for some weights $\b{w}^{\text{\tiny{FS}}} \in \mathbb{R}^J$ and generator vectors \sloppy{${\b{\lambda}^1 , \ldots , \b{\lambda}^J \in M}$}. 
Because this structure typically greatly simplifies design of the weights, many classical polynomial-based cubature rules are fully symmetric~\cite{McNameeStenger1967,Genz1986,GenzKeister1996,LuDarmofal2004}, including certain sparse grids~\cite{NovakRitter1999,NovakRitter1999b}

\subsection{Fully Symmetric Bayesian Cubature} \label{sec:bc-fss}

The central aim of this article is to derive generalisations for the Bayes--Sard and multi-output Bayesian cubatures of the following result from~\cite{Karvonen2018}, originally developed only for the standard Bayesian cubature method. 

\begin{theorem} \label{thm:bc-fss}
Consider the standard Bayesian cubature method based on a domain $M$, measure $\nu$, and kernel $k$ that are each fully symmetric and fix the mean function to be $m \equiv 0$.
Suppose that the point set is a union of $J$ fully symmetric sets: $X = \bigcup_{j=1}^J [\b{\lambda}^j]$ for some distinct generator vectors \sloppy{${\Lambda = \{ \b{\lambda}^1,\ldots,\b{\lambda}^J \} \subset M}$}.
Then the output of the standard Bayesian cubature method can be expressed in the fully symmetric form
\begin{align*}
\mu_N(f^\dagger) &= \sum_{j=1}^J w_{\Lambda,j} \sum_{\b{x} \in [\b{\lambda}^j]} f^\dagger(\b{x}), \\
\sigma_N^2 &= k_{\nu,\nu} - \sum_{j=1}^J w_{\Lambda,j} k_\nu(\b{\lambda}^j) \#[\b{\lambda}^j].
\end{align*}
The weights $\b{w}_\Lambda \in \mathbb{R}^J$ are the solution to the linear system $\b{S} \b{w}_\Lambda = \b{k}_{\nu,\Lambda}$ of $J$ equations, where
\begin{equation*}
[\b{S}]_{ij} = \sum_{\b{x} \in [\b{\lambda}^j]} k(\b{\lambda}^i,\b{x}) \quad \text{and} \quad [\b{k}_{\nu,\Lambda}]_j = k_\nu(\b{\lambda}^j).
\end{equation*}
\end{theorem}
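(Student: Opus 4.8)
The plan is to exploit the block structure that full symmetry imposes on the kernel matrix $\b{K}_X$ and the kernel-mean vector $\b{k}_{\nu,X}$, and to show that the weight vector $\b{w}_X = \b{K}_X^{-1}\b{k}_{\nu,X}$ is constant on each fully symmetric set $[\b{\lambda}^j]$. First I would index the points of $X$ by pairs $(j,\b{P})$ with $j \in \{1,\ldots,J\}$ and $\b{P}$ ranging over $\permsc_m$ (modulo the stabiliser of $\b{\lambda}^j$), so that the point is $\b{P}\b{\lambda}^j$. The key observation is that full symmetry of the kernel gives $k(\b{P}\b{\lambda}^i, \b{Q}\b{\lambda}^j) = k(\b{\lambda}^i, \b{P}^{-1}\b{Q}\b{\lambda}^j)$, so each block $[\b{K}_X]_{(i,\cdot),(j,\cdot)}$ is invariant under the diagonal left-action of $\permsc_m$ — that is, $\b{K}_X$ commutes with the permutation representation $\b{P} \mapsto \b{\Pi}(\b{P})$ that permutes the coordinates of $\mathbb{R}^N$ according to how $\b{P}$ acts on the points of $X$. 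Similarly, full symmetry of the measure $\nu$ (via $p_\nu(\b{x}) = p_\nu(\b{P}\b{x})$) gives $k_\nu(\b{P}\b{x}) = k_\nu(\b{x})$ by a change of variables in the defining integral, so $\b{k}_{\nu,X}$ is itself invariant under every $\b{\Pi}(\b{P})$, i.e. it is constant on each $[\b{\lambda}^j]$.

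The second step is to conclude that $\b{w}_X$ is also invariant under every $\b{\Pi}(\b{P})$: since $\b{\Pi}(\b{P})\b{K}_X = \b{K}_X\b{\Pi}(\b{P})$ we get $\b{\Pi}(\b{P})\b{K}_X^{-1} = \b{K}_X^{-1}\b{\Pi}(\b{P})$, hence $\b{\Pi}(\b{P})\b{w}_X = \b{K}_X^{-1}\b{\Pi}(\b{P})\b{k}_{\nu,X} = \b{K}_X^{-1}\b{k}_{\nu,X} = \b{w}_X$. Therefore $\b{w}_X$ is constant on each $[\b{\lambda}^j]$; call the common value $w_{\Lambda,j}$. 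Plugging this into $\mu_N(f^\dagger) = \b{w}_X^\T\b{f}_X^\dagger$ (with $m \equiv 0$, so $I(m) = 0$ and $\b{m}_X = \b{0}$) immediately yields $\mu_N(f^\dagger) = \sum_{j=1}^J w_{\Lambda,j}\sum_{\b{x}\in[\b{\lambda}^j]} f^\dagger(\b{x})$, and plugging it into $\sigma_N^2 = k_{\nu,\nu} - \b{w}_X^\T\b{k}_{\nu,X}$ together with the fact that $[\b{k}_{\nu,X}]$ equals $k_\nu(\b{\lambda}^j)$ on the $j$-th block of size $\#[\b{\lambda}^j]$ gives $\sigma_N^2 = k_{\nu,\nu} - \sum_{j=1}^J w_{\Lambda,j} k_\nu(\b{\lambda}^j)\#[\b{\lambda}^j]$.

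The final step is to derive the reduced $J \times J$ linear system characterising $\b{w}_\Lambda$. Writing out the $(i,\b{P})$-th row of $\b{K}_X\b{w}_X = \b{k}_{\nu,X}$ for the representative $\b{P} = \b{I}$ (any row in block $i$ gives the same equation by the invariance just established), the left side is $\sum_{j=1}^J w_{\Lambda,j}\sum_{\b{x}\in[\b{\lambda}^j]} k(\b{\lambda}^i,\b{x})$ and the right side is $k_\nu(\b{\lambda}^i)$; this is precisely $\b{S}\b{w}_\Lambda = \b{k}_{\nu,\Lambda}$ with $\b{S}$ and $\b{k}_{\nu,\Lambda}$ as stated. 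I would also note that $\b{S}$ is invertible whenever $\b{K}_X$ is: restricting $\b{K}_X$ to its invariant subspace of $\b{\Pi}$-invariant vectors (spanned by the indicator-type vectors of the blocks) yields, up to a diagonal rescaling by the block sizes, the matrix $\b{S}$, so solvability is inherited.

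The main obstacle I expect is purely bookkeeping rather than conceptual: handling generator vectors with repeated or zero entries, where the orbit $[\b{\lambda}^j]$ has cardinality $\#[\b{\lambda}^j] < |\permsc_m|$ given by Eqn.\ \ref{eq:fss-size}, so that the map $\b{P}\mapsto\b{P}\b{\lambda}^j$ is many-to-one. One must check that the group-averaging argument still goes through — it does, because the relevant action is the permutation action $\b{\Pi}$ on coordinates of $\mathbb{R}^N$, which is genuinely defined on $X$ regardless of stabilisers — and that the multiplicities $\#[\b{\lambda}^j]$ enter the variance formula and the definition of $\b{S}$ with the correct counting. Verifying that the blocks $[\b{\lambda}^i]$ and $[\b{\lambda}^j]$ are disjoint for $i \neq j$ (so that $X$ is genuinely a disjoint union and the indexing is well-defined) follows from the $\b{\lambda}^j$ being distinct generators lying in distinct orbits, which I would state as a standing assumption.
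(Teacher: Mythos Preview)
Your proof is correct. Note that the paper does not itself prove Thm.~\ref{thm:bc-fss} --- it is quoted from \cite{Karvonen2018} --- but the argument given for the generalisation in Thm.~\ref{thm:BSC-FSS} is the natural point of comparison, and it runs in the reverse direction to yours. The paper \emph{postulates} the block-constant form of $\b{w}_X$, writes down the reduced system $\b{S}\b{w}_\Lambda = \b{k}_{\nu,\Lambda}$, and then uses Lems.~\ref{lemma:integral-invariance-kernel} and~\ref{lemma: fss-cross-sets} (which say precisely that $k_\nu$ and the row-sums $\sum_{\b{x}'\in[\b{\lambda}^j]} k(\cdot,\b{x}')$ are constant on each orbit) to expand each of the $J$ reduced equations into the corresponding $\#[\b{\lambda}^i]$ rows of the full system $\b{K}_X\b{w}_X = \b{k}_{\nu,X}$, thereby verifying that the ansatz solves it. You instead package the same two facts as the statement that $\b{K}_X$ commutes with the permutation representation $\b{\Pi}$ and that $\b{k}_{\nu,X}$ is $\b{\Pi}$-invariant, and then \emph{deduce} block-constancy of $\b{w}_X = \b{K}_X^{-1}\b{k}_{\nu,X}$ from the resulting invariance. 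The mathematical content is identical; your group-action framing is arguably cleaner (and your remark on invertibility of $\b{S}$ via restriction to the $\b{\Pi}$-invariant subspace is a bonus the paper only addresses in the special case of equal block sizes), while the paper's ansatz-then-verify route extends more mechanically to the Bayes--Sard block system in Eqn.~\ref{eqn:BSC-weights}, where formulating the right commutation statement would be slightly more awkward.
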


Theorem \ref{thm:bc-fss} demonstrates the principal idea; that one can exploit symmetry to reduce the number of kernel evaluations needed in the standard Bayesian cubature method from $N^2$ to $NJ$ and decrease the number of equations in the linear system that needs to be solved from $N$ to $J$.
Since $J$ is typically considerably smaller than $N = \sum_{j=1}^J \#[\b{\lambda}^j]$, using fully symmetric sets results in a substantial reduction in computational cost.
Numerical examples in~\cite{Karvonen2018} showed that sets containing up to tens of millions of points become feasible in the standard Bayesian cubature method when symmetry exploits are used.
The aim of this paper is to generalise these techniques to the important cases of Bayes--Sard cubature (Section~\ref{sec:bsc-fss}) and multi-output Bayesian cubature (Section~\ref{subsec: vvbc}).

\begin{remark} 
If $\#[\b{\lambda}^1] = \cdots = \#[\b{\lambda}^J]$, the condition number of the matrix $\b{S}$ in Theorem~\ref{thm:bc-fss} cannot exceed that of $\b{K}_X$ (similar results are available for the matrices in Theorems~\ref{thm:BSC-FSS} and~\ref{thm:mobc-fss}). 
This scenario occurs in, for instance, the numerical example of Section~\ref{sec:global-illu-int}.
To verify the claim, observe that by Lemma~\ref{lemma: fss-cross-sets} $\b{S} \b{v} = \alpha \b{v}$ implies that the block vector
\begin{equation*}
\b{v}' = \begin{bmatrix} v_1 \b{1}_{\#[\b{\lambda}^1]} \\ \vdots \\ v_J \b{1}_{\#[\b{\lambda}^J]} \end{bmatrix}
\end{equation*}
satisfies $\b{K}_X \b{v}' = \alpha \b{v}'$. 
Consequently, the spectrum of $\b{S}$ is a subset of that of $\b{K}_X$. 
Furthermore, when \sloppy{${\#[\b{\lambda}^1] = \cdots = \#[\b{\lambda}^J]}$}, the matrix $\b{S}$ is symmetric; therefore its condition number is the ratio of the largest and smallest eigenvalues. 
It follows that the condition number of $\b{S}$ must be smaller or equal to that of $\b{K}_X$.
\end{remark}

\section{Fully Symmetric Bayes--Sard Cubature} \label{sec:bsc-fss}

In this section we first review the Bayes--Sard cubature method from~\cite{Karvonen2018c} and then derive a generalisation of Theorem~\ref{thm:bc-fss} for this method.

\subsection{Bayes--Sard Cubature} \label{sec:bsc}

In the standard Bayesian cubature method the mean function $m$ must be \emph{a priori} specified.
This requirement is relaxed in Bayes--Sard cubature \cite{Karvonen2018c}, where a hierarchical approach is taken instead.
Specifically, in Bayes--Sard cubature the prior mean function is given the parametric form
\begin{eqnarray*}
m_{\b{\theta}}(\b{x}) & = & \theta_1 \phi_1(\b{x}) + \cdots + \theta_Q \phi_Q(\b{x}) \\
& = & \b{\theta}^\T \b{\phi}(\b{x}),
\end{eqnarray*}
where $\b{\phi}(\b{x}) \in \mathbb{R}^Q$ has entries $[\b{\phi}(\b{x})]_i = \phi_i(\b{x})$ and the parameter vector $\b{\theta} = (\theta_1,\ldots,\theta_Q) \in \mathbb{R}^Q$ represents coefficients in a pre-defined basis consisting of functions $\phi_i : M \rightarrow \mathbb{R}$, $i = 1,\dots,Q$, that are assumed $\nu$-integrable and that span a finite-dimensional linear function space $\pi \coloneqq \text{span}(\phi_1,\dots,\phi_Q)$.
That is, $m_{\b{\theta}} \in \pi$ for any $\b{\theta} \in \mathbb{R}^Q$.
Then, for a positive-definite $\b{\Sigma} \in \mathbb{R}^{Q \times Q}$, a Gaussian hyper-prior distribution 
\begin{equation*}
\b{\theta} \sim \mathrm{N}(\b{0}, \b{\Sigma})
\end{equation*}
is specified.
The conditional distribution $f_N$ of this field, based as before on data $\mathcal{D}$, is again Gaussian.
In particular, when $\b{\Sigma}^{-1} \to \b{0}$ (meaning that the prior on $\b{\theta}$ becomes improper, or weakly informative\footnote{See~\cite{Larkin1974,OHagan1991} for slightly different earlier formulations where an improper prior is placed ``directly'' on $\b{\theta}$.}) and assuming that $Q \leq N$, the posterior mean and variance take the forms
\begin{align}
m_N(\b{x}) ={}& \b{\alpha}^\T \b{k}_X(\b{x}) + \b{\beta}^\T \b{\phi}(\b{x}), \label{eqn:BSC-GP-mean} \\
\begin{split}\label{eqn:BSC-GP-var}
k_N(\b{x},\b{x}') ={}& k(\b{x},\b{x}') - \b{k}_X(\b{x})^\T \b{K}_{X}^{-1} \b{k}_X(\b{x}') \\
& + [\b{\Phi}_X^\T \b{K}_{X}^{-1} \b{k}_X(\b{x}) - \b{\phi}(\b{x})]^\T [\b{\Phi}_X^\T \b{K}_{X}^{-1} \b{\Phi}_X]^{-1} \\
& \quad \times [\b{\Phi}_X^\T \b{K}_{X}^{-1} \b{k}_X(\b{x}') - \b{\phi}(\b{x}')],
\end{split}
\end{align}
where $\b{\Phi}_X \in \mathbb{R}^{N \times Q}$ with entries $[\b{\Phi}_X]_{i,j} = \phi_j(\b{x}_i)$ is called the \emph{Vandermonde matrix} and the vectors $\b{\alpha}$ and $\b{\beta}$ are defined via the linear system
\begin{equation}\label{eq: alpha-beta}
\begin{bmatrix} \b{K}_{X} & \b{\Phi}_X \\ \b{\Phi}_X^\T & \b{0} \end{bmatrix} \begin{bmatrix} \b{\alpha} \\ \b{\beta} \end{bmatrix} = \begin{bmatrix} \b{f}^\dagger_X \\ \b{0} \end{bmatrix}.
\end{equation}
For there to exist a unique solution to~\eqref{eq: alpha-beta}, the Vandermonde matrix has to be of full rank.
This technical condition, equivalent to the zero function being the only element of $\pi$ vanishing on $X$, is known as $\pi$-\emph{unisolvency} of the point set $X$.
Throughout the article we assume this is the case; see~\cite[Section~2.2]{Wendland2005} or~\cite[Supplement~B]{Karvonen2018c} for more information and examples of unisolvent point sets.

The output of the \emph{Bayes--Sard cubature} method is the posterior marginal distribution of the integral, namely
\begin{eqnarray} \label{eq:bsc-output}
\int_M f_N(\b{x}) \dif\nu(\b{x}) \sim \mathrm{N}\big( \mu_N(f^\dagger) , \sigma_N^2 \big).
\end{eqnarray}
The mean and variance, obtained by integrating~\eqref{eqn:BSC-GP-mean} and\eqref{eqn:BSC-GP-var}, are
\begin{align*}
\mu_N(f^\dagger) ={}& (\b{w}_{X}^k)^\T \b{f}^\dagger_X, \\ 
\begin{split}
\sigma_N^2 ={}& k_{\nu,\nu} - \b{k}_{\nu,X}^\T \b{K}_{X}^{-1} \b{k}_{\nu,X} \\
&+ (\b{w}_{X}^\pi)^\T \big( \b{\Phi}_X^\T \b{K}_{X}^{-1} \b{k}_{\nu,X} - \b{\phi}_\nu \big),
\end{split}
\end{align*}
where $\b{\phi}_\nu \in \mathbb{R}^Q$ has the entries $[\b{\phi}_\nu]_i = \int_M \phi_i(\b{x}) \mathrm{d}\nu(\b{x})$ and the weight vectors $\b{w}_{X}^k \in \mathbb{R}^N$ and $\b{w}_{X}^\pi \in \mathbb{R}^Q$ are the solution to the linear system
\begin{equation}\label{eqn:BSC-weights}
\begin{bmatrix} \b{K}_{X} & \b{\Phi}_X \\ \b{\Phi}_X^\T & \b{0} \end{bmatrix} \begin{bmatrix} \b{w}_{X}^k \\ \b{w}_{X}^\pi \end{bmatrix} = \begin{bmatrix} \b{k}_{\nu,X} \\ \b{\phi}_\nu \end{bmatrix}.
\end{equation}
The Bayes--Sard weights $\b{w}_X^k$, like the standard Bayesian cubature weights, have a worst case interpretation:
\begin{equation*}
\b{w}_X^k = \argmin_{\b{w} \in \mathbb{R}^N} \, \sup_{\norm[0]{f^\dagger}_{\mathcal{H}(k)} \leq 1} \, \abs[3]{\int_M f^\dagger \dif \nu - \sum_{i=1}^N w_i f^\dagger(\b{x}_i)}
\end{equation*}
subject to the linear constraints $\sum_{i=1}^N w_i \phi_j(\b{x}_i) = I(\phi_j)$ for $j = 1,\ldots,Q$~\cite{DeVore2017}.

The Bayes--Sard method has some important theoretical and practical advantages over the standard Bayesian cubature method, which motivate us to study it in detail:
\begin{itemize}
\item The posterior mean $\mu_N(f^\dagger)$ is exactly equal to the integral $I(f^\dagger)$ if $f^\dagger \in \pi$.
In particular, if $\pi$ contains a non-zero constant function then $\sum_{i=1}^N w_{X,i}^k = 1$ so that the cubature rule is \emph{normalised} \rev{(however, non-negativity of the weights is not guaranteed\footnote{\rev{It is possible to employ a positivity constraint~\cite{Ehler2018}, but in that case there is no convenient closed-form expression for the weights and the Bayesian interpretation is sacrificed.}})}.
This can improve the stability of the method in high-dimensional settings \cite{Karvonen2018c}.
In general, if $\pi$ is the set of polynomials up to a certain order $q$, then the posterior mean is recognised as a cubature rule of algebraic degree $q$~\cite[Definition~3.1]{Cools1997}.
\item Given any cubature rule $\mu(f^\dagger) = \sum_{i=1}^N w_i f^\dagger(\b{x}_i)$ for specified $w_i \in \mathbb{R}$ and $\b{x}_i \in M$, and given any covariance function $k$, one can find an $N$-dimensional function space $\pi$ such that $\mu_N = \mu$.
Furthermore, the posterior standard deviation $\sigma_N$ coincides with the worst case error of the cubature rule $\mu$ in the Hilbert space induced by $k$~\cite[Section~2.4]{Karvonen2018c}.
This demonstrates that any cubature rule can be interpreted as the posterior mean under an infinitude of prior models, providing a bridge between classical and Bayesian cubature methods.
\end{itemize}

The dimension of the linear system in \eqref{eqn:BSC-weights} is $N + Q$.
Thus the computational cost associated with the Bayes--Sard method is strictly greater than that of standard Bayesian cubature; at least $\mathcal{O}(N^3)$ in general. 
It is therefore of considerable practical interest to ask whether symmetry exploits can also be developed for the Bayes--Sard method.

\subsection{A Symmetry Exploit for Bayes--Sard Cubature}

In this section we present a novel result that enables fully symmetric sets to be exploited in the Bayes--Sard cubature method.
In what follows we only consider a function space $\pi$ spanned by even monomials exhibiting symmetries.\footnote{Odd monomials come for ``free''; see Remark~\ref{remark:odd-polynomials}.}
In practice, we do not believe this to be a significant restriction since polynomials typically serve as a good and functional default and, in fact, one retains considerable freedom in selecting the polynomials, not being restricted to, for example, spaces of all polynomials of at most a given degree.

Let $\pi_\alpha \subset \mathbb{N}_0^m$ denote a finite collection of multi-indices that in turn define the function space $\pi$:
\begin{equation*}
\pi = \mathrm{span} \{ \b{x}^{\b{\alpha}} \, \colon \, \b{\alpha} \in \pi_\alpha \}.
\end{equation*}
Here $\b{x}^{\b{\alpha}}$ denotes the monomial $x_1^{\alpha_1} \times \cdots \times x_m^{\alpha_m}$.
Define the index set
\begin{equation*}
\mathbb{E}_0^m \coloneqq \{ \b{\alpha} \in \mathbb{N}_0^m \, \colon \, \alpha_i \text{ is even for every } i=1,\ldots,m \}.
\end{equation*}
Our development will require that $\pi_\alpha$ is a union of $J_\alpha \in \mathbb{N}$ non-negative fully symmetric sets in $\mathbb{E}_0^m$.
That is, $\b{\alpha} \in \pi_\alpha$ implies $\b{P}\b{\alpha} \in \pi_\alpha$ for any permutation matrix $\b{P} \in \perm_m$ and there exist distinct $\b{\alpha}^1, \ldots, \b{\alpha}^{J_\alpha} \in \mathbb{E}_0^m$ such that
\begin{equation*}
\pi_\alpha = \bigcup_{j=1}^{J_\alpha} [\b{\alpha}^{j}]^+.
\end{equation*}
To prove a Bayes--Sard analogue of Theorem~\ref{thm:bc-fss}, we need four simple lemmas:

\begin{lemma} \label{lemma:integral-invariance} Suppose that $M$ and $\nu$ are each fully symmetric. 
If $\b{\alpha} \in \mathbb{E}_0^m$ then $I(\b{x}^{\b{\alpha}}) = I(\b{x}^{\b{P} \b{\alpha}})$ for any $\b{P} \in \perm_m$. 
\end{lemma}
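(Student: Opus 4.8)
The plan is to prove the claimed invariance $I(\b{x}^{\b{\alpha}}) = I(\b{x}^{\b{P}\b{\alpha}})$ directly from the symmetry of the measure $\nu$, using the change-of-variables formula for the permutation map $\b{x} \mapsto \b{P}\b{x}$. First I would note that for a permutation matrix $\b{P}$, the monomial transforms cleanly: $(\b{P}\b{x})^{\b{\alpha}} = \b{x}^{\b{P}^\T \b{\alpha}}$ (the entries of $\b{\alpha}$ simply get relabelled by the permutation), so it suffices to relate $I(\b{x}^{\b{\alpha}})$ to $\int_M (\b{P}\b{x})^{\b{\alpha}} \dif\nu(\b{x})$. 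Concretely, substituting $\b{y} = \b{P}\b{x}$ and using that $M$ is a fully symmetric domain (so $\b{P}M = M$, and the map is a bijection of $M$ onto itself with Jacobian determinant $\pm 1$), one gets
\begin{equation*}
I(\b{x}^{\b{P}\b{\alpha}}) = \int_M \b{y}^{\b{P}\b{\alpha}} \dif\nu(\b{y}) = \int_M (\b{P}\b{x})^{\b{\alpha}} \, p_\nu(\b{P}\b{x}) \dif\b{x},
\end{equation*}
after writing $\dif\nu = p_\nu \dif\b{x}$ with the fully symmetric density $p_\nu$.

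Next I would invoke the defining property of a fully symmetric measure, $p_\nu(\b{P}\b{x}) = p_\nu(\b{x})$ for every $\b{P} \in \permsc_m$, and in particular for the permutation matrix $\b{P} \in \perm_m \subset \permsc_m$ at hand. This collapses the integrand $p_\nu(\b{P}\b{x})$ back to $p_\nu(\b{x})$, giving
\begin{equation*}
I(\b{x}^{\b{P}\b{\alpha}}) = \int_M (\b{P}\b{x})^{\b{\alpha}} p_\nu(\b{x}) \dif\b{x} = \int_M \prod_{i=1}^m x_{P^{-1}(i)}^{\alpha_i} \dif\nu(\b{x}).
\end{equation*}
Relabelling the product index shows the right-hand side equals $I(\b{x}^{\b{\alpha}})$ up to a permutation of the factors, which of course does not change the value of the monomial at any point. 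This is where the hypothesis $\b{\alpha} \in \mathbb{E}_0^m$ becomes irrelevant to the argument as stated (the claim actually holds for any $\b{\alpha} \in \mathbb{N}_0^m$); I would simply carry the evenness hypothesis along since the lemma is stated that way and it costs nothing. I would also remark that the full-symmetry of $M$ is needed only to ensure the substitution $\b{y} = \b{P}\b{x}$ maps $M$ onto $M$, so no boundary terms or domain mismatch arise.

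The main obstacle, such as it is, is purely bookkeeping: keeping the permutation acting on the \emph{multi-index} $\b{\alpha}$ consistent with the permutation acting on the \emph{coordinate vector} $\b{x}$, and being careful about whether one picks up $\b{P}$ or $\b{P}^\T = \b{P}^{-1}$ at each step. There is no analytic difficulty — measurability and finiteness of $I(\b{x}^{\b{\alpha}})$ are implicitly assumed (as they must be for the Bayes--Sard construction to make sense, since $\b{\phi}_\nu$ appears in the weight system), and the change of variables for a linear bijection with unit-modulus Jacobian is elementary. An alternative, slightly slicker route avoids densities entirely: since $\nu$ is fully symmetric it is in particular invariant under the pushforward by every $\b{P} \in \perm_m$, i.e. $\b{P}_\# \nu = \nu$, and then $I(\b{x}^{\b{P}\b{\alpha}}) = \int (\b{P}\b{x})^{\b{\alpha}} \dif\nu = \int \b{x}^{\b{\alpha}} \dif(\b{P}_\#\nu) = \int \b{x}^{\b{\alpha}} \dif\nu = I(\b{x}^{\b{\alpha}})$; I would likely present this invariance-of-measure version as the cleanest argument and mention the density computation only if a reader prefers it.
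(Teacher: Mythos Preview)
Your approach is essentially identical to the paper's: a change of variables under a permutation, invoking $|\det \b{P}| = 1$, $\b{P}M = M$, and $p_\nu(\b{P}\b{x}) = p_\nu(\b{x})$, together with the monomial identity $(\b{P}^{-1}\b{x})^{\b{\alpha}} = \b{x}^{\b{P}\b{\alpha}}$; the paper likewise remarks that the conclusion holds for all $\b{\alpha} \in \mathbb{N}_0^m$, so your observation about the evenness hypothesis being superfluous matches. One small bookkeeping slip: after substituting $\b{y} = \b{P}\b{x}$ into $\int \b{y}^{\b{P}\b{\alpha}}\,p_\nu(\b{y})\dif\b{y}$ you should get $(\b{P}\b{x})^{\b{P}\b{\alpha}} = \b{x}^{\b{\alpha}}$ in the integrand, not $(\b{P}\b{x})^{\b{\alpha}}$ --- but this is exactly the $\b{P}$-vs-$\b{P}^\T$ hazard you already flag, and the argument goes through once corrected.
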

\begin{proof} First, observe that $(\b{P}^{-1} \b{x})^{\b{\alpha}} = \b{x}^{\b{P} \b{\alpha}}$. \rev{By the change of variables formula of pushforwards and the assumption $\b{P}^{-1}_*(\nu) = \nu$,}
\begin{equation*}
  \begin{split}
  I(\b{x}^{\b{\alpha}}) &= \int_M \b{x}^{\b{\alpha}} \dif \nu(\b{x}) \\
  &= \int_M \b{x}^{\b{\alpha}} \dif \b{P}^{-1}_*(\nu)(\b{x}) \\
  &= \int_M (\b{P}^{-1} \b{x})^{\b{\alpha}} \dif \nu(\b{x}) \\
 &= \int_M \b{x}^{\b{P} \b{\alpha}} \dif \nu(\b{x}) \\
  &= I(\b{x}^{\b{P} \b{\alpha}})
\end{split}
\end{equation*}
for any $\b{\alpha} \in \mathbb{N}_0^m$. \qed
\end{proof}

\begin{lemma} \label{lemma:integral-invariance-kernel}
Suppose that $M$, $\nu$, and $k$ are each fully symmetric and let $\b{\lambda} \in M$.
Then $k_\nu(\b{x}) = k_\nu(\b{\lambda})$ for every $\b{x} \in [\b{\lambda}]$.
\end{lemma}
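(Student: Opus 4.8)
The plan is to reduce the claim to a change of variables, exactly as in the proof of Lem.~\ref{lemma:integral-invariance}. Recall that $k_\nu(\b{x}) = \int_M k(\b{x},\b{x}') \dif \nu(\b{x}')$, and that any $\b{x} \in [\b{\lambda}]$ can be written as $\b{x} = \b{P}\b{\lambda}$ for some $\b{P} \in \permsc_m$; it therefore suffices to show $k_\nu(\b{P}\b{\lambda}) = k_\nu(\b{\lambda})$ for every such $\b{P}$.

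First I would pass to the density form $k_\nu(\b{P}\b{\lambda}) = \int_M k(\b{P}\b{\lambda},\b{x}') \, p_\nu(\b{x}') \dif \b{x}'$, which is legitimate because $\nu$ is fully symmetric and hence admits a density $p_\nu$ with respect to the Lebesgue measure on $M$. Then I would substitute $\b{x}' = \b{P}\b{y}$. Since $M$ is a fully symmetric domain we have $\b{P}M = M$, and since signed permutation matrices satisfy $\abs[0]{\det \b{P}} = 1$, the change of variables yields $k_\nu(\b{P}\b{\lambda}) = \int_M k(\b{P}\b{\lambda},\b{P}\b{y}) \, p_\nu(\b{P}\b{y}) \dif \b{y}$.

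It then remains to invoke the two remaining symmetry hypotheses: $k(\b{P}\b{\lambda},\b{P}\b{y}) = k(\b{\lambda},\b{y})$ because $k$ is a fully symmetric kernel, and $p_\nu(\b{P}\b{y}) = p_\nu(\b{y})$ because the density $p_\nu$ is invariant under $\permsc_m$. Substituting both gives $k_\nu(\b{P}\b{\lambda}) = \int_M k(\b{\lambda},\b{y}) \, p_\nu(\b{y}) \dif \b{y} = k_\nu(\b{\lambda})$, which is the assertion.

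The only delicate point is the change-of-variables step, and even there the difficulty is pure bookkeeping: one must check that $\b{x}' \mapsto \b{P}\b{y}$ maps $M$ bijectively onto itself (guaranteed by full symmetry of $M$) and that the Jacobian factor is exactly one (guaranteed by $\b{P}$ being a signed permutation matrix). No measure-theoretic subtleties arise beyond those already handled in Lem.~\ref{lemma:integral-invariance}; indeed, writing $k(\b{P}\b{\lambda},\cdot) = k(\b{\lambda},\b{P}^{-1}\cdot)$, the statement is essentially Lem.~\ref{lemma:integral-invariance} applied to the function $\b{y} \mapsto k(\b{\lambda},\b{y})$, so an alternative and shorter plan would be to phrase it as a one-line corollary of that lemma.
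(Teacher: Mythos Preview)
Your argument is correct: the change of variables $\b{x}' = \b{P}\b{y}$ together with the three symmetry hypotheses on $M$, $p_\nu$, and $k$ gives exactly $k_\nu(\b{P}\b{\lambda}) = k_\nu(\b{\lambda})$, and your bookkeeping on the Jacobian and on $\b{P}M = M$ is accurate. Note that the paper does not actually prove this lemma in-text; it simply cites \cite[Lem.~3.5]{Karvonen2018}, so your self-contained derivation (which, as you observe, is essentially Lem.~\ref{lemma:integral-invariance} applied to $\b{y} \mapsto k(\b{\lambda},\b{y})$) supplies strictly more than the paper does here.
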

\begin{proof}
  \rev{The proof is essentially identical to that of Lemma~\ref{lemma:integral-invariance}.} \qed
\end{proof}

\begin{lemma}\label{lemma:sum-invariance} Let $\b{\lambda} \in \mathbb{R}^m$ and $\b{\alpha} \in \mathbb{E}_0^m$. Then
\begin{align}
\sum_{ \b{\beta} \in [\b{\alpha}]^+} \b{x}^{\b{\beta}} &= \sum_{ \b{\beta} \in [\b{\alpha}]^+} \b{\lambda}^{\b{\beta}} &&\text{ for any } \quad \b{x} \in [\b{\lambda}], \label{eq:xpowb-invariance} \\
\sum_{\b{x} \in [\b{\lambda}]} \b{x}^{\b{\beta}} &= \sum_{\b{x} \in [\b{\lambda}]} \b{x}^{\b{\alpha}} &&\text{ for any } \quad \b{\beta} \in [\b{\alpha}]^+. \label{eq:xpowb2-invariance}
\end{align}
\end{lemma}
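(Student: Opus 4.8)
The plan is to prove both identities by exhibiting explicit bijections on the relevant index sets and permutation groups, exploiting the fact that monomials with even exponents are insensitive to sign changes, so that the fully symmetric set $[\b{\lambda}]$ and its non-negative counterpart $[\b{\lambda}]^+$ behave interchangeably when paired with exponents in $\mathbb{E}_0^m$.

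For identity \eqref{eq:xpowb-invariance}, first I would note that if $\b{x} \in [\b{\lambda}]$ then $\b{x} = \b{P}\b{\lambda}$ for some $\b{P} \in \permsc_m$; writing $\b{P} = \b{D}\b{Q}$ with $\b{D}$ a diagonal sign matrix and $\b{Q}$ a permutation matrix, the even-exponent property gives $\b{x}^{\b{\beta}} = (\b{Q}\b{\lambda})^{\b{\beta}} = \b{\lambda}^{\b{Q}^{-1}\b{\beta}}$ for any $\b{\beta} \in \mathbb{E}_0^m$, since $\b{D}$ contributes a factor $\prod_i s_i^{\beta_i} = 1$. The key observation is then that $\b{\beta} \mapsto \b{Q}^{-1}\b{\beta}$ permutes $[\b{\alpha}]^+$, because $[\b{\alpha}]^+$ is by definition closed under permutation matrices. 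Summing over $\b{\beta} \in [\b{\alpha}]^+$ and re-indexing via this bijection yields the claimed equality. One subtlety to check is that all exponents appearing, namely the elements of $[\b{\alpha}]^+$, indeed lie in $\mathbb{E}_0^m$; this holds because $\b{\alpha} \in \mathbb{E}_0^m$ and permuting coordinates preserves the property of every entry being even.

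For identity \eqref{eq:xpowb2-invariance}, I would argue dually. Write $\b{\beta} = \b{Q}\b{\alpha}$ for some permutation matrix $\b{Q} \in \perm_m$. Then for each $\b{x} \in [\b{\lambda}]$ we have $\b{x}^{\b{\beta}} = \b{x}^{\b{Q}\b{\alpha}} = (\b{Q}^{-1}\b{x})^{\b{\alpha}}$, and since $\b{Q}^{-1} \in \permsc_m$ the map $\b{x} \mapsto \b{Q}^{-1}\b{x}$ is a bijection of the fully symmetric set $[\b{\lambda}]$ onto itself (by the very definition of $[\b{\lambda}]$ as the orbit under $\permsc_m$). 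Re-indexing the sum over $\b{x} \in [\b{\lambda}]$ through this bijection gives $\sum_{\b{x} \in [\b{\lambda}]} \b{x}^{\b{\beta}} = \sum_{\b{x} \in [\b{\lambda}]} \b{x}^{\b{\alpha}}$. Note this direction does not even require the even-exponent hypothesis on $\b{\alpha}$ beyond ensuring $\b{\beta}$ ranges over a permutation of $\b{\alpha}$, but I would keep the hypothesis for consistency of the statement.

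I do not anticipate a serious obstacle here: both parts are orbit-counting arguments and the only point needing care is bookkeeping the distinction between $\permsc_m$ (sign changes allowed, used for $[\b{\lambda}]$) and $\perm_m$ (permutations only, used for $[\b{\alpha}]^+$), together with the cancellation of signs that makes these compatible on $\mathbb{E}_0^m$. The mildest technical annoyance is that $[\b{\lambda}]$ may contain repeated coordinate patterns when $\b{\lambda}$ has repeated or zero generators, so strictly the "sum over $\b{x} \in [\b{\lambda}]$" is a sum over the set (not the multiset of group elements); the bijection argument should therefore be phrased at the level of the sets $[\b{\lambda}]$ and $[\b{\alpha}]^+$ rather than the groups, which is clean since any $\b{P} \in \permsc_m$ acts as a genuine bijection of the set $[\b{\lambda}]$.
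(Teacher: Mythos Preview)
Your proposal is correct and follows essentially the same orbit/bijection argument as the paper: the paper writes the sign-stripping step via the notation $\b{P}^+$ (the entrywise absolute value of $\b{P} \in \permsc_m$) rather than your explicit decomposition $\b{P} = \b{D}\b{Q}$, but the underlying mechanism---even exponents kill the sign factor, leaving a pure permutation that bijects the index set---is identical in both parts. Your remark on summing over the set $[\b{\lambda}]$ rather than a multiset is a sound clarification that the paper leaves implicit.
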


\begin{proof}
For any $\b{\alpha} \in \mathbb{E}_0^m$, $\b{x} \in [\b{\lambda}]$, and $\b{P} \in \permsc_m$,
\begin{equation*}
\sum_{ \b{\beta} \in [\b{\alpha}]^+} \b{x}^{\b{\beta}} = \sum_{ \b{\beta} \in [\b{\alpha}]^+} (\b{P}^{-1} \b{P}\b{x})^{\b{\beta}} = \sum_{ \b{\beta} \in [\b{\alpha}]^+} (\b{P} \b{x})^{\b{P}^+ \b{\beta}},
\end{equation*}
where $\b{P}^+ \in \perm_m$ has the elements $[\b{P}^+]_{ij} = \abs[0]{[\b{P}]_{ij}}$ and the second equality follows from the fact that every element of $\b{\beta}$ is even. 
Because $[\b{P}^+ \b{\alpha}]^+ = [\b{\alpha}]^+$, it follows that $\sum_{ \b{\beta} \in [\b{\alpha}]^+} \b{x}^{\b{\beta}} = \sum_{ \b{\beta} \in [\b{\alpha}]^+} (\b{P} \b{x})^{\b{\beta}}$. That is, 
\begin{equation*}
\sum_{ \b{\beta} \in [\b{\alpha}]^+} \b{x}^{\b{\beta}} = \sum_{ \b{\beta} \in [\b{\alpha}]^+} \b{\lambda}^{\b{\beta}}
\end{equation*}
since $\b{\lambda} = \b{P}\b{x}$ for some $\b{P} \in \permsc_m$.
Consider then the ``transpose'' sum $\sum_{\b{x} \in [\b{\lambda}]} \b{x}^{\b{\beta}}$ for $\b{\beta} \in [\b{\alpha}]^+$. Similar arguments as above establish that
\begin{equation*}
\sum_{\b{x} \in [\b{\lambda}]} \b{x}^{\b{\beta}} = \sum_{\b{x} \in [\b{\lambda}]} (\b{P} \b{x})^{\b{P} \b{\beta}} = \sum_{\b{x} \in [\b{\lambda}]} \b{x}^{\b{P} \b{\beta}}
\end{equation*}
for any $\b{P} \in \perm_m$. Consequently,
\begin{equation*}
\sum_{\b{x} \in [\b{\lambda}]} \b{x}^{\b{\beta}} = \sum_{\b{x} \in [\b{\lambda}]} \b{x}^{\b{\alpha}}
\end{equation*}
for every $\b{\beta} \in [\b{\alpha}]^+$. \qed
\end{proof}

\begin{lemma} \label{lemma: fss-cross-sets}
Let $\b{\lambda}, \b{\lambda}' \in \mathbb{R}^m$ and suppose that the kernel $k$ is fully symmetric.
Then
\begin{equation*}
\sum_{ \b{x}' \in [\b{\lambda}']} k(\b{x},\b{x}') = \sum_{ \b{x}' \in [\b{\lambda}']} k(\b{\lambda},\b{x}') \quad \text{for any} \quad \b{x} \in [\b{\lambda}].
\end{equation*}
\end{lemma}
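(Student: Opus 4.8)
The plan is to exploit the group structure of $\permsc_m$ together with the defining invariance of a fully symmetric kernel, $k(\b{P}\b{u},\b{P}\b{v}) = k(\b{u},\b{v})$ for every $\b{P} \in \permsc_m$. The one preliminary fact needed is that a fully symmetric set is itself invariant under this action: $\b{P}[\b{\lambda}'] = [\b{\lambda}']$ for any $\b{P} \in \permsc_m$. This is immediate from the representation $[\b{\lambda}'] = \bigcup_{\b{Q} \in \permsc_m} \b{Q}\b{\lambda}'$, since left-multiplication by $\b{P}$ sends $\permsc_m$ bijectively onto itself ($\permsc_m$ being closed under matrix multiplication, with each $\b{P}^{-1} = \b{P}^\T$ again lying in $\permsc_m$).

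Given this, I would argue as follows. Fix $\b{x} \in [\b{\lambda}]$ and pick $\b{P} \in \permsc_m$ with $\b{x} = \b{P}\b{\lambda}$, so that $\b{\lambda} = \b{P}^{-1}\b{x}$ with $\b{P}^{-1} \in \permsc_m$. Re-index the left-hand sum by the substitution $\b{x}' = \b{P}\b{y}'$; because $\b{P}^{-1}[\b{\lambda}'] = [\b{\lambda}']$, as $\b{x}'$ runs through $[\b{\lambda}']$, so does $\b{y}'$. Hence
\[
\sum_{\b{x}' \in [\b{\lambda}']} k(\b{x},\b{x}') = \sum_{\b{y}' \in [\b{\lambda}']} k(\b{P}\b{\lambda},\b{P}\b{y}') = \sum_{\b{y}' \in [\b{\lambda}']} k(\b{\lambda},\b{y}'),
\]
where the second equality applies the full symmetry of $k$ term by term. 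The right-hand side is exactly $\sum_{\b{x}' \in [\b{\lambda}']} k(\b{\lambda},\b{x}')$, which is the claimed identity.

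There is no genuine obstacle in this argument; the only point that deserves an explicit line is the invariance $\b{P}[\b{\lambda}'] = [\b{\lambda}']$, which in turn rests solely on $\permsc_m$ being a group. The identical reasoning, with $\permsc_m$ replaced by the permutation group $\perm_m$, would prove the analogous statement for non-negative fully symmetric sets $[\b{\lambda}']^+$ should that variant be needed elsewhere.
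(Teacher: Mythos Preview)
Your proof is correct and follows essentially the same route as the paper's: pick $\b{P} \in \permsc_m$ with $\b{x} = \b{P}\b{\lambda}$, apply the kernel's full symmetry to pull $\b{P}$ off both arguments, and then re-index using the invariance $[\b{P}^{-1}\b{\lambda}'] = [\b{\lambda}']$. The only cosmetic difference is that the paper writes the re-indexing as $\sum_{\b{x}' \in [\b{\lambda}']} k(\b{\lambda}, \b{P}^{-1}\b{x}') = \sum_{\b{x}' \in [\b{P}^{-1}\b{\lambda}']} k(\b{\lambda}, \b{x}')$ rather than via an explicit substitution, but the argument is the same.
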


\begin{proof}
For any $\b{x} \in [\b{\lambda}]$ there is $\b{P}_{\b{x}} \in \permsc_m$ such that $\b{x} = \b{P}_{\b{x}} \b{\lambda}$.
Therefore
\begin{equation*}
\begin{split}
\sum_{\b{x}' \in [\b{\lambda}']} k(\b{x},\b{x}') &= \sum_{\b{x}' \in [\b{\lambda}']} k(\b{P}_{\b{x}} \b{\lambda},\b{x}') \\
&= \sum_{\b{x}' \in [\b{\lambda}']} k(\b{P}_{\b{x}}^{-1} \b{P}_{\b{x}} \b{\lambda}, \b{P}_{\b{x}}^{-1} \b{x}') \\
&= \sum_{\b{x}' \in [\b{\lambda}']} k(\b{\lambda}, \b{P}_{\b{x}}^{-1} \b{x}') \\
&= \sum_{\b{x}' \in [\b{P}_{\b{x}}^{-1} \b{\lambda}']} k(\b{\lambda}, \b{x}'),
\end{split}
\end{equation*}
and the claim follows from the fact that $[\b{P} \b{\lambda}'] = [\b{\lambda}']$ for any $\b{P} \in \permsc_m$. \qed
\end{proof}

We are now ready to prove the main result of this section.
Theorem \ref{thm:BSC-FSS} establishes sufficient conditions for the Bayes--Sard cubature rule to be fully symmetric and, in that case, provides an explicit simplification of the its output~\eqref{eq:bsc-output}.

\begin{theorem}\label{thm:BSC-FSS}
Consider the Bayes--Sard cubature method based on a domain $M$, measure $\nu$, and kernel $k$ that are each fully symmetric. 
Suppose that 
\begin{equation*}
\pi = \mathrm{span} \{ \b{x}^{\b{\alpha}} \, \colon \, \b{\alpha} \in \pi_\alpha \} \quad \text{ and } \quad \pi_\alpha = \bigcup_{j=1}^{J_\alpha} [\b{\alpha}^{j}]^+ 
\end{equation*}
for a collection $\mathcal{A} = \{ \b{\alpha}^1,\ldots,\b{\alpha}^{J_\alpha} \} \subset \mathbb{E}_0^m$ of distinct even multi-indices and that $X$ is a union of $J$ distinct fully symmetric sets: $X = \bigcup_{j=1}^J [\b{\lambda}^j]$ for a collection $\Lambda = \{\b{\lambda}^1,\ldots,\b{\lambda}^J\} \subset M$ of distinct generator vectors.
Then the output of the Bayes--Sard cubature method can be expressed in the fully symmetric form
\begin{align*}
\mu_N(f^\dagger) ={}& \sum_{j=1}^J w_{\Lambda,j}^k \sum_{\b{x} \in [\b{\lambda}^j]} f^\dagger(\b{x}) ,\\
\sigma_N^2 ={}& k_{\nu,\nu} - \sum_{j=1}^J w_{\Lambda,j}^\sigma k_\nu(\b{\lambda}^j) n_j^\Lambda \\
&+ \sum_{j=1}^{J_\alpha} w_{\mathcal{A},j}^\pi n_j^\mathcal{A} \Bigg[ \sum_{i=1}^J w_{\Lambda,i}^\sigma \sum_{\b{x} \in [\b{\lambda}^{i}]} \b{x}^{\b{\alpha}^j} - I(\b{x}^{\b{\alpha}^j}) \Bigg],
\end{align*}
where $n_j^\Lambda = \#[\b{\lambda}^j]$, $n_j^\mathcal{A} = \#[\b{\alpha}^j]^+$, and $\b{w}_\Lambda^\sigma \in \mathbb{R}^J$ are the weights $\b{w}_\Lambda$ in Theorem~\ref{thm:bc-fss}.
The weights \sloppy{${\b{w}_\Lambda^k \in \mathbb{R}^J}$} and $\b{w}_\mathcal{A}^\pi \in \mathbb{R}^{J_\alpha}$ form the solution to the linear system
\begin{equation}\label{eqn:block-system-fss}
\begin{bmatrix} \b{S} & \b{A} \\ \b{B} & \b{0} \end{bmatrix} \begin{bmatrix} \b{w}_\Lambda^k \\ \b{w}_\mathcal{A}^\pi \end{bmatrix} = \begin{bmatrix} \b{k}_{\nu,\Lambda} \\ \b{\phi}_{\nu,\mathcal{A}} \end{bmatrix}
\end{equation}
of $J + J_\alpha$ equations, where $[\b{k}_{\nu,\Lambda}]_j = k_\nu(\b{\lambda}^j)$, $[\b{\phi}_{\nu,\mathcal{A}}]_j = I(\b{x}^{\b{\alpha}^j})$, $[\b{S}]_{ij} = \sum_{\b{x} \in [\b{\lambda}^j]} k(\b{\lambda}^i,\b{x})$, $[\b{A}]_{ij} = \sum_{\b{\beta} \in [\b{\alpha}^j]^+} (\b{\lambda}^i)^{\b{\beta}}$, and $[\b{B}]_{ij} = \sum_{\b{x} \in [\b{\lambda}^j]} \b{x}^{\b{\alpha}^i}$.

\end{theorem}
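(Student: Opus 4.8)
The plan is to prove that, under the hypotheses, the unique solution $(\b{w}_X^k,\b{w}_X^\pi)$ of the full $(N+Q)$-dimensional system \eqref{eqn:BSC-weights} is \emph{block-constant} in the following sense: the entry $[\b{w}_X^k]_{\b{x}}$ depends on $\b{x}\in X$ only through the fully symmetric set $[\b{\lambda}^j]$ containing it, and the entry of $\b{w}_X^\pi$ associated with a monomial exponent $\b{\beta}\in\pi_\alpha$ depends on $\b{\beta}$ only through the non-negative fully symmetric set $[\b{\alpha}^i]^+$ containing it. Granting this, the block system \eqref{eqn:block-system-fss} and the simplified output are obtained by substituting this structure back into \eqref{eqn:BSC-weights} and into the expressions for $\mu_N$ and $\sigma_N^2$, and then collapsing the resulting sums using Lemmas \ref{lemma:integral-invariance}--\ref{lemma: fss-cross-sets}.

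First I would establish block-constancy by a symmetry-and-uniqueness argument. By $\pi$-unisolvency the saddle-point matrix in \eqref{eqn:BSC-weights} is invertible, so the solution is unique. Fix $\b{P}\in\permsc_m$. Because $X$ is fully symmetric there is a permutation $\rho$ of the rows with $\b{x}_{\rho(i)}=\b{P}\b{x}_i$, and because $\pi_\alpha$ is a union of non-negative fully symmetric sets it is closed under $\b{\beta}\mapsto\b{P}^+\b{\beta}$ (with $\b{P}^+\in\perm_m$ having entries $\abs[0]{[\b{P}]_{kl}}$), so there is a permutation $\tau$ of the columns of $\b{\Phi}_X$ sending the $j$-th exponent to $\b{P}^+\b{\alpha}^{(j)}$. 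Full symmetry of $k$ gives $[\b{K}_X]_{\rho(i)\rho(j)}=[\b{K}_X]_{ij}$; Lemma \ref{lemma:integral-invariance-kernel} gives $[\b{k}_{\nu,X}]_{\rho(i)}=[\b{k}_{\nu,X}]_i$; Lemma \ref{lemma:integral-invariance} gives $[\b{\phi}_\nu]_{\tau(j)}=[\b{\phi}_\nu]_j$; and, crucially, since every exponent in $\pi_\alpha$ is even, the identity $(\b{P}\b{x})^{\b{P}^+\b{\beta}}=\b{x}^{\b{\beta}}$ from the proof of Lemma \ref{lemma:sum-invariance} yields $[\b{\Phi}_X]_{\rho(i)\tau(j)}=[\b{\Phi}_X]_{ij}$. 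Hence simultaneously permuting the rows by $\rho$ and the columns by $\tau$ leaves the linear system \eqref{eqn:BSC-weights} invariant, so by uniqueness the solution is fixed by this permutation. Letting $\b{P}$ range over all of $\permsc_m$ and using that any two points lying in the same $[\b{\lambda}^j]$, and any two exponents lying in the same $[\b{\alpha}^i]^+$, are interchanged by some such transformation, we obtain block-constancy. I expect this to be the main obstacle: one must check that the evenness of the exponents is exactly what makes the Vandermonde matrix invariant under the combined action of sign changes on points and plain permutations on exponents, and this is where the restriction to even monomials is used essentially.

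Next, writing $w^k_{\Lambda,j}$ for the common value of $[\b{w}_X^k]_{\b{x}}$ on $[\b{\lambda}^j]$ and $w^\pi_{\mathcal{A},i}$ for the common value of the entry of $\b{w}_X^\pi$ on $[\b{\alpha}^i]^+$, I would substitute into the two block rows of \eqref{eqn:BSC-weights} indexed by the generators. On a row $\b{x}\in[\b{\lambda}^i]$ of the first block, Lemma \ref{lemma: fss-cross-sets} collapses $\sum_{\b{x}'\in[\b{\lambda}^j]}k(\b{x},\b{x}')$ to $[\b{S}]_{ij}$, Eqn. \eqref{eq:xpowb-invariance} collapses $\sum_{\b{\beta}\in[\b{\alpha}^j]^+}\b{x}^{\b{\beta}}$ to $[\b{A}]_{ij}$, and Lemma \ref{lemma:integral-invariance-kernel} identifies the right-hand side as $k_\nu(\b{\lambda}^i)$; on a row indexed by an exponent in $[\b{\alpha}^i]^+$ of the second block, Eqn. \eqref{eq:xpowb2-invariance} collapses $\sum_{\b{x}\in[\b{\lambda}^j]}\b{x}^{\b{\beta}}$ to $[\b{B}]_{ij}$ and Lemma \ref{lemma:integral-invariance} identifies the right-hand side as $I(\b{x}^{\b{\alpha}^i})$. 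Since every row within a given symmetric set yields the same equation, this shows the reduced system \eqref{eqn:block-system-fss} is equivalent to \eqref{eqn:BSC-weights}; in particular it has a unique solution.

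Finally, for the output I would substitute the block-constant structure into $\mu_N(f^\dagger)=(\b{w}_X^k)^\T\b{f}_X^\dagger$, which gives the stated sum over the sets $[\b{\lambda}^j]$ at once, and into $\sigma_N^2 = k_{\nu,\nu} - \b{k}_{\nu,X}^\T\b{K}_X^{-1}\b{k}_{\nu,X} + (\b{w}_X^\pi)^\T\big(\b{\Phi}_X^\T\b{K}_X^{-1}\b{k}_{\nu,X}-\b{\phi}_\nu\big)$. Here $\b{K}_X^{-1}\b{k}_{\nu,X}$ is precisely the standard Bayesian cubature weight vector, which by Thm \ref{thm:bc-fss} is block-constant with values $\b{w}_\Lambda^\sigma$; hence $\b{k}_{\nu,X}^\T\b{K}_X^{-1}\b{k}_{\nu,X}=\sum_{j}w^\sigma_{\Lambda,j}k_\nu(\b{\lambda}^j)n_j^\Lambda$ by Lemma \ref{lemma:integral-invariance-kernel}. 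Expanding the last term, grouping the sums by the sets $[\b{\lambda}^i]$ and $[\b{\alpha}^j]^+$, and using Eqn. \eqref{eq:xpowb2-invariance} to replace $\sum_{\b{\beta}\in[\b{\alpha}^j]^+}\sum_{\b{x}\in[\b{\lambda}^i]}\b{x}^{\b{\beta}}$ by $n_j^\mathcal{A}\sum_{\b{x}\in[\b{\lambda}^i]}\b{x}^{\b{\alpha}^j}$ together with Lemma \ref{lemma:integral-invariance} to replace $\sum_{\b{\beta}\in[\b{\alpha}^j]^+}I(\b{x}^{\b{\beta}})$ by $n_j^\mathcal{A}I(\b{x}^{\b{\alpha}^j})$, produces exactly the bracketed term in the claimed formula for $\sigma_N^2$. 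The remaining steps are routine algebra.
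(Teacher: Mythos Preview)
Your argument is correct and arrives at the result using the same auxiliary lemmas as the paper, but the route you take to establish block-constancy differs from the paper's. The paper proceeds in the opposite direction: it begins with the reduced $(J+J_\alpha)$-system~\eqref{eqn:block-system-fss}, uses Lemmas~\ref{lemma:integral-invariance}--\ref{lemma: fss-cross-sets} to \emph{expand} each of its equations into multiple identical copies indexed by the points of $[\b{\lambda}^i]$ and the exponents of $[\b{\alpha}^i]^+$, thereby recovering the full $(N+Q)$-system~\eqref{eqn:BSC-weights}, and then simply observes that the block-constant extension of $(\b{w}_\Lambda^k,\b{w}_\mathcal{A}^\pi)$ solves this full system. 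Uniqueness (via $\pi$-unisolvency) then forces this to be \emph{the} solution, but no group-action argument is invoked. Your approach instead derives block-constancy \emph{a priori} by showing that the full system is invariant under the joint action of $\permsc_m$ on points and $\perm_m$ on exponents, and appealing to uniqueness of the solution as a fixed point. This is a more conceptual explanation of \emph{why} the symmetric structure must emerge, and it makes transparent that the restriction to even multi-indices is precisely what is needed for the Vandermonde block to be equivariant under sign changes; the paper's verification-style proof is more direct but leaves this somewhat implicit. Both treatments of $\sigma_N^2$ are identical.
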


\begin{proof} 
The linear system~\eqref{eqn:block-system-fss} is equivalent to
\begin{equation*}
\sum_{j=1}^J w_{\Lambda,j}^k S_{ij} + \sum_{j=1}^{J_\alpha} w_{\mathcal{A},j}^\pi A_{ij} = k_\nu(\b{\lambda}^i) \: \text{ for } \: i \in \{1,\ldots,J\}
\end{equation*}
and
\begin{equation*}
\sum_{j=1}^J w_{\Lambda,j}^k B_{ij} = I(\b{x}^{\b{\alpha}^i}) \quad \text{ for } \quad i \in \{1, \ldots, J_\alpha\}.
\end{equation*}
These two groups of equations are equivalent, respectively, to the $N$ equations (Lemmas~\ref{lemma:integral-invariance-kernel} and~\ref{lemma: fss-cross-sets} and~\eqref{eq:xpowb-invariance})
\begin{equation*}
\sum_{j=1}^J w_{\Lambda,j}^k \sum_{\b{x}' \in [\b{\lambda}^j]} k(\b{x},\b{x}') + \sum_{j=1}^{J_\alpha} w_{\mathcal{A},j}^\pi \sum_{\b{\beta} \in [\b{\alpha}^j]^+} \b{x}^{\b{\alpha}} = k_\nu(\b{x})
\end{equation*}
for $i \in \{1,\ldots,J\}$, $\b{x} \in [\b{\lambda}^i]$, and to the $Q$ equations (Lemma~\ref{lemma:integral-invariance} and~\eqref{eq:xpowb2-invariance})
\begin{equation*}
\sum_{j=1}^J w_{\Lambda,j}^k \sum_{\b{x}' \in [\b{\lambda}^j]} (\b{x}')^{\b{\alpha}} = I(\b{x}^{\b{\alpha}})
\end{equation*}
for $\b{\alpha} \in \pi_\alpha$.
From these two equations we recognise that 
\begin{equation*}
\b{w}_X^k = \begin{bmatrix} w_{\Lambda,1}^k \b{1}_{n_1^\Lambda} \\ \vdots \\ w_{\Lambda,J}^k \b{1}_{n_J^\Lambda} \end{bmatrix}
\quad \text{and} \quad 
\b{w}_X^\pi = \begin{bmatrix} w_{\mathcal{A},1}^k \b{1}_{n_1^\mathcal{A}} \\ \vdots \\ w_{\mathcal{A},J_\alpha}^k \b{1}_{n_{J_\alpha}^\mathcal{A}} \end{bmatrix}
\end{equation*}
solve the full Bayes--Sard weight system
\begin{equation*}
\begin{bmatrix} \b{K}_X & \b{\Phi}_X \\ \b{\Phi}_X^\T & \b{0} \end{bmatrix} \begin{bmatrix} \b{w}_X^k \\ \b{w}_X^\pi \end{bmatrix} = \begin{bmatrix} \b{k}_{\nu,X} \\ \b{\phi}_{\nu} \end{bmatrix}.
\end{equation*}
The expression for the Bayes--Sard variance $\sigma_N^2$ can be obtained by first recognising that the unique elements of $\b{K}_X^{-1} \b{k}_{\nu,X}$ are precisely the weights $\b{w}_\Lambda$ in Theorem~\ref{thm:bc-fss}, here denoted $\b{w}_\Lambda^\sigma$. 
Then we compute
\begin{equation*}
\b{k}_{\nu,X}^\T \b{K}_X^{-1} \b{k}_{\nu,X} = \sum_{j=1}^J w_{\Lambda,j}^\sigma k_\nu(\b{\lambda}^j) n_j^\Lambda
\end{equation*}
and
\begin{equation*}
\begin{split}
(&\b{w}_{X}^\pi)^\T \big( \b{\Phi}_X^\T \b{K}_{X}^{-1} \b{k}_{\nu,X} - \b{\phi}_\nu \big) \\
&= (\b{w}_{X}^\pi)^\T \begin{bmatrix} \Big( \sum_{j=1}^J w_{\Lambda_j}^\sigma \sum_{\b{x} \in [\b{\lambda}]^j} \b{x}^{\b{\alpha}^1} - I(\b{x}^{\b{\alpha}^1}) \Big) \b{1}_{n_1^\mathcal{A}} \\ \vdots \\ \Big( \sum_{j=1}^J w_{\Lambda_j}^\sigma \sum_{\b{x} \in [\b{\lambda}]^j} \b{x}^{\b{\alpha}^{J_\alpha}} - I(\b{x}^{\b{\alpha}^{J_\alpha}}) \Big) \b{1}_{n_{J_\alpha}^\mathcal{A}} \end{bmatrix}
\end{split}
\end{equation*}
that, when expanded, yields the result. \qed
\end{proof}

\begin{remark}\label{remark:odd-polynomials}
The polynomial space $\pi$ could be appended with fully symmetric collections of odd polynomials (i.e., by using additional basis functions $\b{x}^{\b{\beta}}$, $\b{\beta} \in [\b{\alpha}]^+$ for $\b{\alpha} \notin \mathbb{E}_0^m$). 
However, by doing this one gains nothing since the weights in $\b{w}_{\mathcal{A}}^\pi$ corresponding to these basis functions turn out to be zero.
This is quite easy to see from the easily proven facts that $\sum_{\b{x} \in [\b{\lambda}]} \b{x}^{\b{\beta}} = 0$ and $I(\b{x}^{\b{\beta}}) = 0$ whenever $\b{\beta} \notin \mathbb{E}_0^m$.
\end{remark}

Just like Theorem~\ref{thm:bc-fss} for the standard Bayesian cubature, Theorem~\ref{thm:BSC-FSS} reduces the number of kernel and basis function evaluations from roughly $N^2 + Q^2$ to $NJ + NJ_\alpha$ and the size of the linear system that needs to be solved from $N+Q$ to $J+J_\alpha$.
Typically, this translates to a significant computational speed-up; see Section~\ref{sec:zcb} for a numerical example involving point sets of up to $N = 179,\!400$.
Such results could not realistically be obtained by direct solution of the original linear system~\eqref{eqn:BSC-weights}.

\section{Fully Symmetric Multi-Output Bayesian Cubature} \label{subsec: vvbc}

In this section we review the multi-output Bayesian cubature method recently proposed by Xi et al.\@~\cite{Xi2018} and show how to exploit fully symmetric sets in reducing computational complexity of this method.

\subsection{Multi-Output Bayesian Cubature} \label{subsec: vvbc-fss}

One often needs to integrate a number of related integrands, $f_1^\dagger,\ldots,f_D^\dagger \colon  M \to \mathbb{R}$.
It is of course trivial to treat these as a set of $D$ independent integrals and apply either the standard Bayesian or Bayes--Sard cubature method to approximate each integral.
However, in many cases the relationship between the integrands can be explicitly modelled and leveraged.

Such a setting can be handled by modelling a single vector-valued function $\b{f}^\dagger \coloneqq (f_1^\dagger,\ldots,f_D^\dagger) \colon M \to \mathbb{R}^D$ as a vector-valued Gaussian field; full details can be found in~\cite{Alvarez2012}.
In this case, the data $\mathcal{D}$ consist of evaluations
\begin{equation*}
\b{f}^\dagger_{d,X_d} = \big[ f^\dagger_d(\b{x}_{d1}),\ldots,f^\dagger_d(\b{x}_{dN}) \big]^\T \in \mathbb{R}^N
\end{equation*}
at points $X_d = \{ \b{x}_{d1},\ldots,\b{x}_{dN}\} \subset M$ for each \sloppy{${d = 1,\ldots,D}$}.
In this section we denote $X = \{ X_d \}_{d=1}^D$.
The assumption that each integrand is evaluated at $N$ points is made only for notational simplicity; all results can be easily modified to accommodate different numbers of points for each integrand.
Evaluations of each integrand are concatenated into the vector
\begin{equation*}
\begin{split}
\b{f}_X^\dagger &= \big[ f^\dagger_1(\b{x}_{11}), \ldots, f^\dagger_1(\b{x}_{1N}), \ldots, \\
& \hspace{1cm} f^\dagger_D(\b{x}_{D1}), \ldots, f^\dagger_D(\b{x}_{DN}) \big]^\T \in \mathbb{R}^{DN}.
\end{split}
\end{equation*}

In multi-output Bayesian cubature the integrand is modelled as a vector-valued Gaussian field $\b{f} \in \mathbb{R}^D$ characterised by vector-valued mean function $\b{m} \colon M \rightarrow \mathbb{R}^D$ and matrix-valued covariance function \sloppy{${\b{k} \colon M \times M \rightarrow \mathbb{R}^{D \times D}}$}. For notational simplicity, the prior mean function is fixed at $\b{m} \equiv \b{0}$.
The conditional distribution $\b{f}_N$ of this field, based on the data $\mathcal{D} = (X, \b{f}_X^\dagger)$, is also Gaussian with mean and covariance functions
\begin{eqnarray*}
\b{m}_N(\b{x}) & = & \b{k}_X(\b{x})^\T \b{K}_X^{-1} \b{f}^\dagger_X, \\
\b{k}_N(\b{x},\b{x}') & = & \b{k}(\b{x},\b{x}') - \b{k}_X(\b{x})^\T \b{K}_X^{-1} \b{k}_X(\b{x}).
\end{eqnarray*}
Here, in contrast to~\ref{eqn:GP-mean} and~\ref{eqn:GP-var}, all objects are of extended dimensions:
\begin{align*}
\b{k}_X(\b{x}) &= \begin{bmatrix} \b{k}_{X_1}(\b{x}) \\ \vdots \\ \b{k}_{X_D}(\b{x}) \end{bmatrix} \in \mathbb{R}^{DN \times D}, \\
\b{K}_X &= \begin{bmatrix} \b{K}_{X_1,X_1}^{11} & \cdots & \b{K}_{X_1,X_D}^{1D} \\ \vdots & \ddots & \vdots \\ \b{K}_{X_D,X_1}^{D1} & \cdots & \b{K}_{X_D,X_D}^{DD} \end{bmatrix} \in \mathbb{R}^{DN \times DN},
\end{align*}
where $\b{k}_{X_d}(\b{x})$ and $\b{K}_{X_d,X_q}^{dq}$ are the $N \times D$ and $N \times N$ matrices
\begin{align*}
\b{k}_{X_d}(\b{x}) &= \begin{bmatrix} [\b{k}(\b{x}_{d1},\b{x})]_{11} & \cdots & [\b{k}(\b{x}_{d1},\b{x})]_{1D} \\ \vdots & \ddots & \vdots \\ [\b{k}(\b{x}_{dN},\b{x})]_{11} & \cdots & [\b{k}(\b{x}_{dN},\b{x})]_{DD} \end{bmatrix}, \\
\b{K}_{X_d,X_q}^{dq} &= \begin{bmatrix} [\b{k}(\b{x}_{d1},\b{x}_{q1})]_{dq} & \cdots & [\b{k}(\b{x}_{d1},\b{x}_{qN})]_{dq} \\ \vdots & \ddots & \vdots \\ [\b{k}(\b{x}_{dN},\b{x}_{q1})]_{dq} & \cdots & [\b{k}(\b{x}_{dN},\b{x}_{qN})]_{dq} \end{bmatrix}.
\end{align*}
The output of the \emph{multi-output} (or \emph{vector-valued}) Bayesian cubature method is a $D$-dimensional Gaussian random vector:
\begin{eqnarray*}
\int_M \b{f}_N(\b{x}) \mathrm{d}\nu(\b{x}) \sim \mathrm{N}\big( \b{\mu}_N(\b{f}^\dagger) , \b{\Sigma}_N \big)
\end{eqnarray*}
with
\begin{eqnarray}
\b{\mu}_N(\b{f}^\dagger) & = & \b{k}_{\nu,X}^\T \b{K}_X^{-1} \b{f}^\dagger_X, \label{eq: multi output 1} \\
\b{\Sigma}_N & = & \b{k}_{\nu,\nu} - \b{k}_{\nu,X}^\T \b{K}_X^{-1} \b{k}_{\nu,X}, \label{eq: multi output 2}
\end{eqnarray}
where $\b{k}_{\nu,X} = \int_{M} \b{k}_X(\b{x}) \dif \nu(\b{x}) \in \mathbb{R}^{DN \times D}$ and \sloppy{${\b{k}_{\nu,\nu} = \int_M \b{k}(\b{x},\b{x}') \dif \nu(\b{x}) \dif \nu(\b{x}') \in \mathbb{R}^{D \times D}}$}.
Equivalently, the posterior mean and variance can be written in terms of the weights
\begin{equation} \label{eq:mo-bc-weights}
\b{W}_X = \b{K}_X^{-1} \b{k}_{\nu,X} = [\b{W}_1^\T \cdots \b{W}_D^\T]^\T \in \mathbb{R}^{DN \times D},
\end{equation}
where $\b{W}_d \in \mathbb{R}^{N \times D}$.
For example, mean of the $d$th integral then takes the form
\begin{equation}
\mu_N(f_d^\dagger) = [\b{\mu}_N(\b{f}^\dagger)]_d = \sum_{q=1}^D \sum_{i=1}^N [\b{W}_q]_{id} f^\dagger_q(\b{x}_{qi}). \label{eq: dth integral multi output}
\end{equation}
If the $d$th integrand is modelled as independent of all the other integrands, the posterior mean~\eqref{eq: dth integral multi output} reduces to the standard Bayesian cubature posterior mean~\eqref{eq: BC mean}.

\subsection{Separable Kernels}

The structure of matrices appearing in the multi-output Bayesian cubature equations can be simplified when the multi-output kernel is \emph{separable}.
This means that there is a positive-definite $\b{B} \in \mathbb{R}^{D \times D}$ such that
\begin{equation}
\b{k}(\b{x},\b{x}') = \b{B} c(\b{x},\b{x}') \label{eq: sep kernel}
\end{equation}
for some positive-definite kernel $c \colon M \times M \to \mathbb{R}$.
The matrices $\b{K}_X$ and $\b{k}_{\nu,X}$ now assume the simplified forms
\begin{align*}
\b{k}_{\nu,X} &= \begin{bmatrix} B_{11} \b{c}_{\nu,X_1} & \cdots & B_{1D} \b{c}_{\nu,X_D} \\ \vdots & \ddots & \vdots \\ B_{D1} \b{c}_{\nu,X_1} & \cdots & B_{DD} \b{c}_{\nu,X_D} \end{bmatrix}, \\
\b{K}_X &= \begin{bmatrix} B_{11} \b{C}_{X_1,X_1} & \cdots & B_{1D} \b{C}_{X_1,X_D} \\ \vdots & \ddots & \vdots \\ B_{D1} \b{C}_{X_D,X_1} & \cdots & B_{DD} \b{C}_{X_D,X_D} \end{bmatrix},
\end{align*}
where $[\b{c}_{\nu,X_d}]_i = c_{\nu}(\b{x}_{di})$ and $[\b{C}_{X_d,X_q}]_{ij} = c(\b{x}_{di},\b{x}_{qj})$.
However, even with the simplified structure afforded by the use of separable kernels, the implementation of multi-output Bayesian cubature remains computationally challenging, calling for some $(DN)^2$ kernel evaluations and solution to a linear system of dimension $DN$. 
This is problematic if a large number of integrands is to be handled simultaneously.
The next section demonstrates how fully symmetric points sets can be exploited to reduce this cost.

\begin{remark}
Note that using the same point set $X'$ for each integrand yields immediate computational simplification, since in this case the above matrices can be written as Kronecker products:
\begin{equation*}
\b{k}_{\nu,X} = \b{B} \otimes \b{c}_{\nu,X'} \quad \text{and} \quad \b{K}_X = \b{B} \otimes \b{C}_{X',X'}.
\end{equation*}
However, this case is of little practical interest because, by the properties of the Kronecker product,
\begin{equation*}
\b{W}_X = \b{I}_D \otimes \b{w}_{X'},
\end{equation*}
where $\b{w}_{X'} \in \mathbb{R}^N$ are the standard Bayesian cubature weights~\eqref{eqn:BC-weights} for the covariance function $c$ and points $X'$~\cite[Supplements B and C.1]{Xi2018}.
That is, the integral estimates $\b{\mu}_N(\b{f}^\dagger)$ reduce to those given by the standard Bayesian cubature method applied independently to each integral.
\end{remark}

\subsection{A Symmetry Exploit for Multi-Output Bayesian Cubature}

Our main result in this section is a second generalisation of Theorem~\ref{thm:bc-fss}, in this case for the multi-output Bayesian cubature method.

\begin{theorem} \label{thm:mobc-fss}
Consider the multi-output Bayesian cubature method based on a separable matrix-valued kernel $\b{k}$.
Let the domain $M$, measure $\nu$, and uni-output kernel $c$ each be fully symmetric and fix the mean function to be $\b{m} \equiv \b{0}$. 
Suppose that each $X_d$ is a union of $J$ fully symmetric sets: $X_d = \bigcup_{j=1}^J [\b{\lambda}^{dj}]$ for some $\Lambda_d = \{ \b{\lambda}^{d1},\ldots,\b{\lambda}^{dJ} \} \subset M$ such that $n_j^\Lambda = \# [\b{\lambda}^{dj}]$ does not depend on $d$ and, consequently, $\# X_d = N$ for each \sloppy{${d=1,\ldots,D}$}.
Then the output of the multi-output Bayesian cubature method can be expressed in the fully symmetric form
\begin{align}
[\b{\mu}_N(\b{f}^\dagger)]_d &= \sum_{q=1}^D \sum_{j=1}^J [\b{W}_{\!\!\Lambda,q}]_{jd} \sum_{\b{x} \in [\b{\lambda}^{qj}]} f_q^\dagger(\b{x}), \label{eq:BSC-FSS-mean} \\
\b{\Sigma}_N &= \b{B} c_{\nu,\nu} - \sum_{d=1}^D \sum_{j=1}^J \b{P}_{dj} \b{B}_j^\text{\tiny{diag}},
\end{align}
where $\b{B}_j^\text{\tiny{diag}}$ is the diagonal $D \times D$ matrix formed out of the $j$th row of $\b{B}$ and
\begin{equation*}
\b{P}_{dj} = \begin{bmatrix} [\b{W}_{\!\!\Lambda,d}]_{j1} c_\nu(\b{\lambda}^{1j}) n_j^\Lambda & \cdots & [\b{W}_{\!\!\Lambda,d}]_{j1} c_\nu(\b{\lambda}^{Dj}) n_j^\Lambda \\ \vdots & \ddots & \vdots \\ [\b{W}_{\!\!\Lambda,d}]_{jD} c_\nu(\b{\lambda}^{1j}) n_j^\Lambda & \cdots & [\b{W}_{\!\!\Lambda,d}]_{jD} c_\nu(\b{\lambda}^{Dj}) n_j^\Lambda \end{bmatrix}.
\end{equation*}
The weight matrix
\begin{equation*}
\b{W}_{\!\!\Lambda} = [ \b{W}_{\!\!\Lambda,1}^\T \cdots \b{W}_{\!\!\Lambda,D}^\T ]^\T \in \mathbb{R}^{DJ \times D}, \quad \b{W}_{\!\!\Lambda,d} \in \mathbb{R}^{J \times D},
\end{equation*}
is the solution to the linear system $\b{S} \b{W}_{\!\!\Lambda} = \b{k}_{\nu,\Lambda}$, where
\begin{align*}
\b{S} &= \begin{bmatrix} B_{11} \b{S}_{11} & \cdots & B_{1D} \b{S}_{1D} \\ \vdots & \ddots & \vdots \\ B_{D1} \b{S}_{D1} & \cdots & B_{DD} \b{S}_{DD} \end{bmatrix} \in \mathbb{R}^{DJ \times DJ}, \\
[\b{S}_{dq}]_{ij} &= \sum_{\b{x} \in [\b{\lambda}^{qj}]} c( \b{\lambda}^{di}, \b{x} ), \\
\b{k}_{\nu,\lambda} &= \begin{bmatrix} B_{11} \b{c}_{\nu,\Lambda_1} & \cdots & B_{1D} \b{c}_{\nu,\Lambda_D} \\ \vdots & \ddots & \vdots \\ B_{D1} \b{c}_{\nu,\Lambda_1} & \cdots & B_{DD} \b{c}_{\nu,\Lambda_D} \end{bmatrix} \in \mathbb{R}^{DJ \times D}, \\
[\b{c}_{\nu,\Lambda_d}]_j &= c_\nu(\b{\lambda}^{dj}).
\end{align*} 
\end{theorem}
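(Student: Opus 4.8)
The plan is to follow the template of Thms.~\ref{thm:bc-fss} and~\ref{thm:BSC-FSS}: use separability to express everything through the uni-output kernel $c$, argue that the multi-output weight matrix must be constant along each fully symmetric set, and then read off the reduced $DJ$-dimensional system together with the fully symmetric forms of the posterior mean and covariance.

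First I would record the relevant block structure. By separability~\eqref{eq: sep kernel}, the weight matrix $\b{W}_X$ in~\eqref{eq:mo-bc-weights} is the unique solution of $\b{K}_X\b{W}_X = \b{k}_{\nu,X}$, where the $(d,q)$ block of $\b{K}_X$ is $B_{dq}\b{C}_{X_d,X_q}$ and the $(d,l)$ block of $\b{k}_{\nu,X}$ is $B_{dl}\b{c}_{\nu,X_d}$; uniqueness holds because $\b{K}_X$ is positive definite, which is where positive definiteness of both $c$ and $\b{B}$ enters (cf.~\cite{Alvarez2012,Xi2018}). Let $V\subset\mathbb{R}^{DN}$ be the subspace of vectors that, within each block $d$, are constant on each fully symmetric set $[\b{\lambda}^{dj}]$. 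I would then establish two facts. \emph{(i)} By Lem.~\ref{lemma:integral-invariance-kernel}, $c_\nu$ is constant on each $[\b{\lambda}^{dj}]$, so every column of $\b{k}_{\nu,X}$ lies in $V$. \emph{(ii)} By Lem.~\ref{lemma: fss-cross-sets}, for $\b{v}'\in V$ with reduced coordinates $v_{qj}$ and any point $\b{x}\in[\b{\lambda}^{di}]$, the entry of $\b{K}_X\b{v}'$ indexed by $\b{x}$ in block $d$ equals $\sum_{q=1}^D B_{dq}\sum_{j=1}^J v_{qj}\sum_{\b{y}\in[\b{\lambda}^{qj}]}c(\b{\lambda}^{di},\b{y}) = \sum_{q=1}^D\sum_{j=1}^J B_{dq}[\b{S}_{dq}]_{ij}v_{qj}$, which does not depend on the choice of $\b{x}\in[\b{\lambda}^{di}]$. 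Hence $\b{K}_X$ maps $V$ into itself, and its restriction to $V$ is represented by the matrix $\b{S}$ of the theorem.

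Combining these, since $\b{K}_X$ is invertible and $\b{K}_X(V)\subseteq V$ with $V$ finite dimensional, $\b{K}_X$ restricts to a bijection of $V$; as $\b{k}_{\nu,X}$ has all columns in $V$, so does $\b{W}_X=\b{K}_X^{-1}\b{k}_{\nu,X}$. Thus $\b{W}_X$ is block-constant; writing $\b{W}_{\!\!\Lambda}$ for its distinct rows, fact \emph{(ii)} shows $\b{K}_X\b{W}_X=\b{k}_{\nu,X}$ is equivalent to $\b{S}\b{W}_{\!\!\Lambda}=\b{k}_{\nu,\Lambda}$, and in particular $\b{S}$ is invertible. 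The mean identity~\eqref{eq:BSC-FSS-mean} then follows by grouping the sum in~\eqref{eq: dth integral multi output} over fully symmetric sets, using that $[\b{W}_q]_{id}$ depends only on the set containing $\b{x}_{qi}$. For the covariance I would begin from $\b{\Sigma}_N=\b{B}c_{\nu,\nu}-\b{k}_{\nu,X}^\T\b{W}_X$ (from~\eqref{eq: multi output 2} together with $\b{k}_{\nu,\nu}=\b{B}c_{\nu,\nu}$), expand $\b{k}_{\nu,X}^\T\b{W}_X$ block-wise, collapse each inner sum $\sum_{\b{x}\in X_d}$ into $\sum_{j=1}^J n_j^\Lambda(\cdots)$ via fact \emph{(i)}, and finally regroup the double sum over $d$ and $j$ into the matrix form involving $\b{P}_{dj}$ and $\b{B}_j^{\text{\tiny{diag}}}$.

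I expect the main difficulty to be bookkeeping rather than mathematics: keeping the $DN$-versus-$DJ$ block indexing consistent throughout, verifying that the restriction of $\b{K}_X$ to $V$ is exactly $\b{S}$ with the scalar multipliers $B_{dq}$ in the correct positions, and --- the one genuinely fiddly point --- repackaging the collapsed covariance double sum into the compact expression with $\b{P}_{dj}$ and the diagonal matrix $\b{B}_j^{\text{\tiny{diag}}}$ built from the $j$th row of $\b{B}$. No new symmetry lemmas beyond Lems.~\ref{lemma:integral-invariance-kernel} and~\ref{lemma: fss-cross-sets} are required, since, in contrast to the Bayes--Sard case of Thm.~\ref{thm:BSC-FSS}, no polynomial space is involved here.
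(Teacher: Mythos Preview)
Your proposal is correct and follows essentially the same route as the paper: both arguments rely solely on Lems.~\ref{lemma:integral-invariance-kernel} and~\ref{lemma: fss-cross-sets} to show that the full $DN$-dimensional system $\b{K}_X\b{W}_X=\b{k}_{\nu,X}$ collapses to the $DJ$-dimensional system $\b{S}\b{W}_{\!\!\Lambda}=\b{k}_{\nu,\Lambda}$, and then read off the mean and covariance by regrouping. The only difference is packaging --- the paper expands the reduced system into the full one and verifies the block-constant weights solve it, whereas your invariant-subspace argument runs in the reverse direction and yields invertibility of $\b{S}$ as a by-product; both are equivalent given uniqueness of the solution to the full system.
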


\begin{proof} 
The matrix equation $\b{S} \b{W}_{\!\!\Lambda} = \b{k}_{\nu,\Lambda}$ corresponds to the $D^2 J$ equations
\begin{equation*}
\sum_{q=1}^D B_{dq} \sum_{i=1}^J [\b{S}_{dq}]_{ji} [\b{W}_{\!\!\Lambda,q}]_{id'} = B_{dd'} c_{\nu}(\b{\lambda}^{d'j})
\end{equation*}
for $(d,d',j) \in \{1, \ldots, d\}^2 \times \{1,\ldots,J\}$.
In turn, through Lemmas~\ref{lemma:integral-invariance-kernel} and~\ref{lemma: fss-cross-sets}, these are equivalent to
\begin{equation*}
\sum_{q=1}^D B_{dq} \sum_{i=1}^J [\b{W}_{\!\!\Lambda,q}]_{id'} \sum_{\b{x} \in [\b{\lambda}^{qi}]} c( \b{x}_{dj'}, \b{x} ) = B_{dd'} c_{\nu}(\b{x}_{d'j'})
\end{equation*}
for $(d,d') \in \{1,\ldots,D\}^2$ and $j'=1,\ldots,n_j^\Lambda$, $j=1,\ldots,J$.
There are a total of $D^2 \sum_{j=1}^J n_j^\Lambda = D^2 N$ of these equations.
The weights 
\begin{equation*}
\b{W}_d = \begin{bmatrix} [\b{W}_{\Lambda,d}]_{11} \b{1}_{n^\Lambda_1} & \cdots & [\b{W}_{\Lambda,d}]_{1D} \b{1}_{n^\Lambda_1} \\ \vdots & \ddots & \vdots \\ [\b{W}_{\Lambda,d}]_{J1} \b{1}_{n^\Lambda_J} & \cdots & [\b{W}_{\Lambda,d}]_{JD} \b{1}_{n^\Lambda_J}\end{bmatrix} 
\end{equation*}
in \eqref{eq:mo-bc-weights} are then seen to solve the full matrix equation $\b{K}_X \b{W} = \b{k}_{\nu,X}$.
The expressions for the posterior mean and variance follow from straightforward manipulation of~\eqref{eq: multi output 1} and~\eqref{eq: multi output 2}.
\qed
\end{proof}

The computational complexity of forming the fully symmetric weight matrix $\b{W}_{\!\!\Lambda}$ is dominated by the $DJN$ kernel evaluations needed to form $\b{S}$ and the inversion of this $DJ \times DJ$ matrix. 
Due to $J$ often being orders of magnitude smaller than $N$, these tasks remain feasible even for a very large total number of points $DN$.
For example, in Section~\ref{sec:global-illu-int} the result of Theorem~\ref{thm:mobc-fss} is applied to facilitate the simultaneous computation of up to $D = 50$ integrals arising in a global illumination problem, each integrand being evaluated at up to $N = 288$ points.
Such results can barely be obtained by direct solution of the original linear system in~\eqref{eq:mo-bc-weights}.

\section{Symmetric Change of Measure} \label{sec:is-trick}

The results presented in this article, and those originally described in~\cite{Karvonen2018}, rely on the assumption that the measure $\nu$ is fully symmetric (see Section~\ref{sec:fss-objects}).
This is a strong restriction; most measures are not fully symmetric.
However, this assumption can be avoided in a relatively straightforward manner, which is now described.

Suppose that $M$ is a fully symmetric domain and that $\nu$ is an arbitrary measure, admitting a density $p_\nu$, against which the function $f^\dagger \colon M \to \mathbb{R}$ is to be integrated. 
Further suppose that there is a fully symmetric measure $\nu_*$ on $M$ such that $\nu$ is absolutely continuous with respect to $\nu_*$, and therefore admits a density $p_{\nu_*}$ such that the Radon--Nikodym derivative $\mathrm{d} \nu / \mathrm{d} \nu_* = p_\nu(\b{x}) / p_{\nu_*}(\b{x})$ is well-defined.
Then the integral of interest can be re-written as an integral with respect to the fully symmetric measure $\nu_*$:
\begin{eqnarray}\label{eq:is-trick}
\int_M f^\dagger(\b{x}) \dif \nu(\b{x}) & = & \int_{M} f^\dagger_*(\b{x}) \dif \nu_*(\b{x}) , \\ 
f^\dagger_*(\b{x}) & := & f^\dagger(\b{x}) \frac{p_\nu(\b{x})}{p_{\nu_*}(\b{x})}. \nonumber
\end{eqnarray}
Note that the existence of the second integral follows from the Radon--Nikodym property and the monotone convergence theorem.
Thus, the assumption of a fully symmetric measure $\nu$ in the statement of Theorems~\ref{thm:bc-fss},~\ref{thm:BSC-FSS}, and~\ref{thm:mobc-fss} is not overly restrictive.
This \emph{symmetric change of measure} technique is demonstrated on a numerical example in Section~\ref{sec: importance sampling example}.

\begin{remark}
Note that the situation here is unlike standard importance sampling (see e.g. Section~3.3 of \cite{Robert2013}), in that the importance distribution $\nu_*$ is required to be fully symmetric.
As such, it seems not obvious how to mathematically characterise an ``optimal'' choice of $\nu_*$.
Indeed, any notion of optimality ought also depend on the cubature method that will be used.
Nevertheless, obvious constructions (e.g. the choice of $\nu_*$ as an isotropic centred Gaussian for $\nu$ sub-Gaussian and $M = \mathbb{R}^m$) can work rather well.
\end{remark}

\section{Results} \label{sec:results}

In this section we assess the performance of the fully symmetric Bayes--Sard and fully symmetric multi-output Bayesian cubature methods based on computational simplifications provided in Theorems~\ref{thm:BSC-FSS} and~\ref{thm:mobc-fss}.
\rev{MATLAB} code for all examples is provided at \url{https://github.com/tskarvone/bc-symmetry-exploits}.

\subsection{\rev{Selection of Fully Symmetric Sets}}

\begin{figure*}[th!]
\centering
  \includegraphics{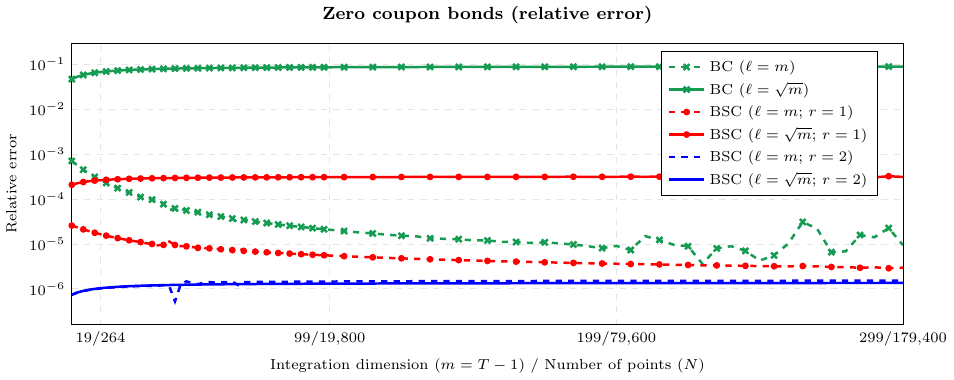}
  \caption{Numerical computation of the integral~\eqref{eq:zcbInt} using fully symmetric Bayesian cubature (BC) and Bayes--Sard cubature (BSC) for different choices of the length-scale $\ell$ and polynomial degree $r$ of the parameteric function space~$\pi$ used in BSC.
}\label{fig:zcb}
\end{figure*}

\rev{The choice of generator vectors $\Lambda = \{ \b{\lambda}^1,\ldots,\b{\lambda}^J \}$ for a fully symmetric point set is practically important and has not yet been discussed.
In principle one may wish to select $\Lambda$ in order to minimise a criterion, such as the posterior standard deviation $\sigma_N$.
However, it appears that such optimal $\Lambda$ are mathematically intractable in general.
Moreover, numerical optimisation methods cannot be naively applied to approximate the optimal $\Lambda$, since in high dimensions a sparsity structure in the generator vectors $\b{\lambda}^j$ is required to prevent creation of massive point sets $[\b{\lambda}^j]$.
Thus, although we cannot provide definitive guidelines on how to select the generators in the setting of this article, there are some useful heuristics that have guided us in the examples to follow and those presented in~\cite[Section~5]{Karvonen2018}:
\begin{itemize}
\item  In low dimensions, say $m \leq 4$, it is feasible to use (quasi) Monte Carlo samples as generators, as each fully symmetric set will contain at most $384$ points (see Table~\ref{table:sizes}). However, a large number of fully symmetric sets may be needed to ensure sufficient coverage of the space. This approach can work, as in Section~\ref{sec:global-illu-int}, but is occasionally prone to failure~\cite[Section~5.3]{Karvonen2018}.
\item In higher dimensions (or when a more robust design is desired), we recommend selecting a tried-and-tested fully symmetric point set, such as a sparse grid~\cite[Chapter 4]{Holtz2011}.
This can then be further modified if required, since fully symmetric sets can be added or removed at will. 
In very high dimensions, this can amount to using effectively low-dimensional generator vectors of the forms $(x_1,0,\ldots,0)$, $(x_2,0,\ldots,0)$, $(x_1,x_2,0,\ldots,0)$ and so on, for points $x_i$ that come from some classical one-dimensional integration rule, such as Gauss--Hermite or Clenshaw--Curtis.
\end{itemize}
These principles guided our choice of fully symmetric point sets in the sequel.
}

\subsection{Zero Coupon Bonds}\label{sec:zcb}

This example involves a model for zero coupon bonds that has been used to assess accuracy and robustness of the Bayes--Sard cubature and fully symmetric Bayesian cubature methods in~\cite{Karvonen2018,Karvonen2018c}.

\subsubsection{Integration Problem}

The integral of interest, arising from Euler--Maruyama discretisation of the Vasicek model, is
\begin{equation}\label{eq:zcbInt}
\begin{split}
P(0,T) &\coloneqq \mathbb{E}\Bigg[ \exp\bigg( -\Delta t \sum_{i=0}^{T-1} r_{t_i} \bigg)\Bigg] \\
&= \exp(-\Delta t r_{t_0}) \mathbb{E}\Bigg[ \exp\bigg( -\Delta t \sum_{i=1}^{T-1} r_{t_i} \bigg)\Bigg],
\end{split}
\end{equation}
where $r_{t_i}$ are particular Gaussian random variables and $\Delta t$ and $r_{t_0}$ are parameters of the integrand.
The dimension $m = T - 1$ of the integrand can be freely selected and the integral admits a convenient closed-form solution; see~\cite[Section~6.1]{Holtz2011} or~\cite[Section~5.5]{Karvonen2018} for a more complete description of this benchmark integral.

\subsubsection{Setting}

The accuracy of the standard Bayesian cubature and Bayes--Sard cubature methods was compared, for computing the integral~\eqref{eq:zcbInt} in a setting identical to that of~\cite[Section~5.5]{Karvonen2018}.
In particular, the same parameter values and point set (a sparse grid based on a certain Gauss--Hermite sequence with the origin removed), were used.
The kernel was the Gaussian kernel with length-scale $\ell > 0$:
\begin{equation} \label{eq: gaussian kernel}
k(\b{x},\b{x}') = \exp\bigg( \! - \frac{\norm[0]{\b{x}-\b{x}'}^2}{2\ell^2} \bigg).
\end{equation}
Accuracy of the two cubature methods was assessed for the heuristic length-scale choices $\ell = m$ and $\ell = \sqrt{m}$.
The linear space $\pi$ in the Bayes--Sard method, defined by the collection $\pi_\alpha$ of multi-indices, was taken to be $\pi_\alpha = \{\b{\alpha} : |\alpha| \leq r\}$ for either $r$ = 1 (linear) or $r = 2$ (quadratic) polynomials.
The dimension $T$ ranged between 20 and 300.
Since the number of points in a sparse grid depends on the dimension, the maximal $N$ used was $179,\!400$.
Theorems~\ref{thm:bc-fss} and~\ref{thm:BSC-FSS} facilitated the computation, respectively, of the standard Bayesian cubature and Bayes--Sard cubature method.
Note that, in the results that are presented next, even though $N$ increases, no convergence (or necessarily monotonicity of the error) is to be expected because the integration problem becomes more difficult as $T$ is increased.

\subsubsection{Results}

The results are depicted in Figure~\ref{fig:zcb}. 
We observe that Bayes--Sard method is much less sensitive to the length-scale choice compared to the standard Bayesian cubature method.
For instance, the selection $\ell = \sqrt{m}$ has Bayes--Sard outperform the standard Bayesian cubature by roughly three orders of magnitude.
It is also clear that, in this particular problem, the addition of more polynomial basis functions can significantly improve the integral estimates.

\begin{figure*}[th!]
\centering
  \includegraphics{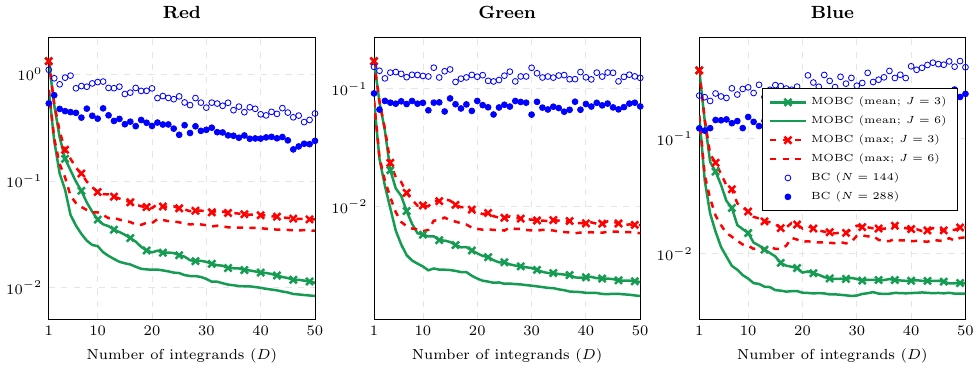}
  \caption{The mean (green; Equation~\eqref{eq:global-illu-mean}) and maximal (red; Equation~\eqref{eq:global-illu-max}) relative integration errors obtained when simultaneously approximating $D$ global illumination integrals~\eqref{eq:global-illu-int} using Bayesian cubature (BC) with random points and fully symmetric multi-output Bayesian cubature (MOBC).
Here $J=3$ and $J=6$ random generator vectors were used to produce a fully symmetric point set of size $N = 144$ (for $J = 3$) and $N = 288$ (for $J = 6$). 
The displayed results have been averaged over 100 independent realisations of the point sets.
}\label{fig:global-illumination}
\end{figure*}

Results at this scale were not possible to obtain in the earlier work of Karvonen et al.\@~\cite{Karvonen2018c}, where the largest value of $N$ considered was 5,000.
In contrast, our result in Theorem~\ref{thm:BSC-FSS} enabled point sets of size up to \sloppy{${N = 179,\!400}$} to be used.
The computational time required to produce the the results for the Bayes--Sard cubature in the most demanding case, $T = 300$ and \sloppy{${m = 2}$}, was on the order of 2.5 minutes on a standard laptop computer.
However, this can be mostly attributed to a sub-optimal algorithm for generating the sparse grid.
Indeed, after the points had been obtained it took roughly one second to compute the Bayes--Sard weights.

\subsection{Global Illumination Integrals} \label{sec:global-illu-int}

Next we considered the multi-output Bayesian cubature method, together with the symmetry exploit developed in Section~\ref{subsec: vvbc-fss}, to compute a collection of closely related integrals arising in a global illumination context.
This is a popular application of Bayesian cubature methods; see~\cite{Brouillat2009,Marques2013,Marques2015,Briol2017,Xi2018} for existing work.
In particular, multi-output Bayesian cubature was applied to the problem that we consider below in~\cite{Xi2018}, where $D = 5$ integrals were simultaneously computed.
Through computational simplifications obtained by using fully symmetric sets, in what follows we simultaneously compute up to $D = 50$ integrals, a ten-fold improvement.

\subsubsection{Integration Problem}

Global illumination is concerned with the rendering of glossy objects in a virtual environment \cite{Dutre2006}.
The integration problem studied here is to compute the \emph{outgoing radiance} $L_0(\b{\omega}_o)$ in the direction $\b{\omega}_o$, for different values of the observation angle $\b{\omega}_o$.
In practical terms, this represents the amount of light travelling from the object to an observer at an observation angle $\b{\omega}_o$.
The need for simultaneous computation for different $\b{\omega}_o$ can arise when the observation angle is rapidly changing, for example as the player moves in a video game context.
The outgoing radiance is given by the integral
\begin{equation*}
L_0(\b{\omega}_0) = L_e(\b{\omega}_0) + \int_{\mathbb{S}^2} L_i(\b{\omega}_i) \rho(\b{\omega}_i, \b{\omega}_o) [\b{\omega}_i^\T \b{n}]_+ \dif \nu(\b{\omega}_i)
\end{equation*}
with respect to the uniform (i.e.\ Riemannian) measure $\nu$ on the unit sphere 
\begin{equation*}
\mathbb{S}^2 = \big\{ \b{x} \in \mathbb{R}^3 \: \colon \norm[0]{\b{x}} = 1 \big\} \subset \mathbb{R}^3.
\end{equation*}
Here $L_e(\b{\omega}_o)$ is the amount of light emitted by the object itself, essentially a constant, whilst $L_i(\b{\omega}_i)$ is the amount of light being reflected from the object, originating from angle $\b{\omega}_i \in \mathbb{S}^2$. 
That reflection is impossible from a reflexive angle is captured by the term $[\b{\omega}_i^\T \b{n}]_+ \coloneqq \max\{0, \b{\omega}_i^\T \b{n}\}$ with $\b{n}$ the unit normal to the object.
That light is reflected less efficiently at larger incidence angles is captured by a \emph{bidirectional reflectance distribution} function
\begin{equation*}
\rho(\b{\omega}_i, \b{\omega}_o) = \frac{1}{2\pi} \exp\big( \b{\omega}_i^\T \b{\omega}_o - 1 \big) .
\end{equation*} 
Evaluation of $L_i(\b{\omega}_i)$ involves a call to an \emph{environment map} (in this case, a picture of a lake in California; see~\cite{Briol2017}), which is associated with a computational communication cost.
The illumination integral must be computed for each of the red, green, and blue (RGB) colour channels; we treat the integration problems corresponding to different colour channels as statistically independent.

\subsubsection{Setting}

The performance of the standard Bayesian cubature and multi-output Bayesian cubature methods was assessed on a collection of $D$ related integrals, where $D$ was varied up to a maximum of $D_\text{max} = 50$.
The integrands were indexed by observation angles $\b{\omega}_o^d$ with a fixed azimuth and elevation ranging uniformly on the interval $[\frac{\pi}{4} - \frac{\pi}{24}, \frac{\pi}{4} + \frac{\pi}{24}]$: 
\begin{equation*}
\b{\omega}_o^d := \Bigg(0, \frac{\pi}{4} - \frac{\pi}{24} \bigg[ 1 - 2 \left( \frac{d-1}{D_\text{max} - 1} \right) \bigg] \Bigg). 
\end{equation*}
To formulate the problem in the multi-output framework, we define the associated integrands
\begin{equation*}
f^\dagger_d = L_i(\b{\omega}_i) \rho(\b{\omega}_i, \b{\omega}_o^d) [\b{\omega}_i^\T \b{n}]_+
\end{equation*}
for $d = 1, \ldots, D_\text{max}$.
The aim is then to compute the integrals
\begin{equation}\label{eq:global-illu-int}
I(f^\dagger_d) \coloneqq \int_{\mathbb{S}^2} f^\dagger_d(\b{\omega}_i) \dif \nu(\b{\omega}_i).
\end{equation}
In our experiments a separable vector-valued covariance function was used, defined as in~\eqref{eq: sep kernel} with
\begin{equation*}
c(\b{x},\b{x}') = \frac{8}{3} - \norm[0]{\b{x}-\b{x}'}, \qquad [\b{B}]_{dq} = \exp\big( (\b{\omega}_o^d)^\T \b{\omega}_o^q - 1 \big).
\end{equation*}
This prior structure is identical to that used in~\cite{Briol2017,Xi2018} and corresponds to assuming that the integrand belongs to a Sobolev space of smoothness $\frac{3}{2}$.
The kernel $c$ has tractable kernel means: $c_\nu(\b{x}) = \frac{4}{3}$ for every $\b{x} \in \mathbb{S}^2$ and $c_{\nu,\nu} = \frac{4}{3}$.

In order to exploit Theorems~\ref{thm:bc-fss} and~\ref{thm:mobc-fss}, we need to restrict to fully symmetric point sets on $\mathbb{S}^2$.
To obtain such sets we followed the method proposed in~\cite[Section~5.3]{Karvonen2018}. 
That is, we draw, for each $d = 1, \ldots, D$, either $J = 3$ or $J = 6$ independent generator vectors from the uniform distribution $\nu$ on $\mathbb{S}^2$ and use these to generate distinct fully symmetric point sets $X_1, \ldots, X_{D_\text{max}} \subset \mathbb{S}^2$.
Equation~\eqref{eq:fss-size} implies that $N = 3 \times 48 = 144$ or $N = 6 \times 48 = 288$.\footnote{The elements of each random generator vector are almost surely non-zero and distinct.}
This approach to generation of a point set was selected for its simplicity, our main focus being on the multi-output framework and a large number of integrals $D$.
Alternative point sets on $\mathbb{S}^2$ are numerous, such as rotated adaptations of numerically computed approximations to the optimal quasi Monte Carlo designs developed in~\cite{Brauchart2014}.

\begin{figure}[t]
\centering
  \includegraphics{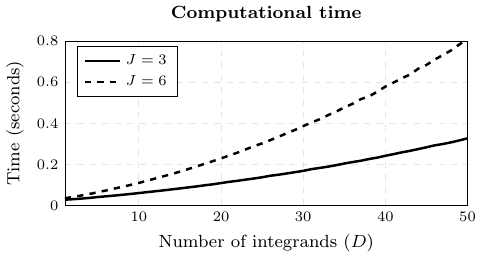}
  \caption{\rev{Average computational time (over 100 independent runs), integrand evaluations included, for computation of the fully symmetric multi-output Bayesian cubature estimates for one color channel. 
The algorithm was implemented in MATLAB and run on a desktop computer with an Intel Xeon 3.40 GHz processor and 15 GB of RAM.}
}\label{fig:global-illumination-time}
\end{figure}

\subsubsection{Results}

The results are depicted in Figure~\ref{fig:global-illumination} in terms of the relative integration error for each RGB colour channel.
For \sloppy{${D = 1,\ldots,D_\text{max}}$}, define the vector-valued functions
\begin{equation*}
\b{f}^{\dagger,D} = ( f^\dagger_1, \ldots , f^\dagger_{D} ) \colon \mathbb{S}^2 \to \mathbb{R}^{D}.
\end{equation*}
The figure shows the improvement in integration accuracy when $D$ increases and more integrands are considered simultaneously.
Displayed are the mean
\begin{equation}\label{eq:global-illu-mean}
\frac{1}{D} \sum_{d=1}^D \frac{\abs[1]{I(f^\dagger_d)-[\b{\mu}_N(\b{f}^{\dagger,D})]_d}}{I(f^\dagger_d)}
\end{equation}
and maximal
\begin{equation}\label{eq:global-illu-max}
\max_{d = 1,\ldots,D} \frac{\abs[1]{I(f^\dagger_d)-[\b{\mu}_N(\b{f}^{\dagger,D})]_d}}{I(f^\dagger_d)}
\end{equation}
relative errors for $D = 1,\ldots,D_\text{max}$.
For comparison, the figure also contains results for the standard Bayesian cubature method, applied separately to each of the uni-output integrands $f^\dagger_d$.
Each of the reference integrals $I(f^\dagger_d)$ was computed using brute force Monte Carlo, with 10 million points used.

In accordance with \cite{Xi2018}, we observed that the multi-output Bayesian cubature method is superior to the standard one already when $D = 5$.
The performance gain of the multi-output method keeps increasing when more integrands are added but is ultimately bounded.
This is reasonable since integrands for wildly different $\b{\omega}_o^d$ can convey little information about each other.
For the smallest values of $D$ the multi-output method is less accurate than the standard Bayesian cubature method.
This can be explained by potential non-uniform covering of the unit sphere when the total number $DJ$ of fully symmetric sets is low (e.g., when some of the generator vectors happen to cluster, the fully symmetric sets they generated do not greatly differ, so that less information is obtained on the integrand). \rev{For instance, the standard deviation over the 100 runs in the relative error of fully symmetric Bayesian cubature for the first integral (i.e., the case $D = 1$ in Figure~\ref{fig:global-illumination}) was 0.34 ($J=3$) or 0.17 ($J=6$) while that of the standard Bayesian cubature with random points was only 0.19 ($N=144$) or 0.11 ($N=288$). See also~\cite[Figure~5.1]{Karvonen2018}.}

Computational times remained reasonable throughout this experiment; \rev{see Figure~\ref{fig:global-illumination-time}.}
For example, without symmetry exploits, the case $D = 50$ and $J = 6$ would require $(DN)^2 = $ 207,360,000 kernel evaluations and inversion of a 14,400-dimensional matrix while Theorem~\ref{thm:mobc-fss} reduces these numbers, respectively, to \sloppy{${DNJ = 86,\!400}$} and $DJ = 300$. \rev{From Figure~\ref{fig:global-illumination-time} it is seen that this computation took only 0.8 seconds.}
This suggests that with more carefully selected fully symmetric point sets it may be possible to realise the desire expressed in~\cite[Section~4]{Xi2018} of simultaneous computation of up to thousands of related integrals.

\begin{figure}[t]
\centering
  \includegraphics{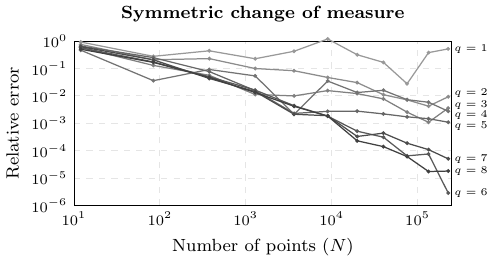}
  \caption{Relative error in numerical integration of the function~\eqref{eq: is example func} using the fully symmetric Bayesian cubature method based on a symmetric change of measure. 
\rev{Here $q$ is used to index symmetricity of $\nu$ and thus more challenging $\nu$ correspond to small $q$.}
}\label{fig:is}
\end{figure}

\subsection{Symmetric Change of Measure Illustration} \label{sec: importance sampling example}

The purpose of this final experiment is to briefly illustrate the symmetric change of measure technique, proposed in Section~\ref{sec:is-trick}.
To limit scope we consider applying this technique in conjunction with the fully symmetric standard Bayesian cubature method (i.e., Theorem~\ref{thm:bc-fss}).

\subsubsection{Integration Problem}

Let $\b{\mu}_f \in \mathbb{R}^6$ and $\b{\Sigma}_f \in \mathbb{R}^{6 \times 6}$ be a vector and a positive-definite matrix.
Consider integration over $\mathbb{R}^6$ of the function
\begin{equation} \label{eq: is example func}
f^\dagger(\b{x}) = \exp\bigg( \! -\frac{1}{2} (\b{x} - \b{\mu}_f)^\T \b{\Sigma}_f^{-1} (\b{x} - \b{\mu}_f) \bigg)
\end{equation}
with respect to a Gaussian mixture distribution, $\nu$, to be specified. 
Integrals of this form can be easily computed in closed form.
For this illustration we took
\begin{equation*}
\b{\mu}_f = \begin{bmatrix} \frac{1}{5} \\ \frac{1}{5}+\frac{3}{50} \\ \vdots \\ \frac{1}{2} - \frac{3}{50} \\ \frac{1}{2} \end{bmatrix}, \quad \b{\Sigma}_f = \begin{bmatrix} \big( \frac{4}{5} \big)^2 \b{I}_3 & \b{0}_{3 \times 3} \\ \b{0}_{3 \times 3} & \big( \frac{11}{10} \big)^2 \b{I}_3 \end{bmatrix}.
\end{equation*}

\subsubsection{Setting}

\rev{For these experiments $\nu$ was taken to be a uniform mixture of eight Gaussian distributions $\mathrm{N}(\b{\mu}_i, \b{\Sigma}_i)$, \sloppy{${i = 1,\dots,8}$}, with their mean vectors drawn independently from the standard normal distributions and detrended so that $\sum_{i=1}^8 \b{\mu}_i = \b{0}$. The covariance matrices of each Gaussian component were independent and normalised draws from the Wishart distribution $W_6(\b{I}_6, d+2(q-1))$, $q \in \mathbb{N}$. 
The resulting $\nu$ is almost surely not fully symmetric and therefore Theorem~\ref{thm:bc-fss} cannot be applied. 
Different values of $q$ correspond to different degrees of symmetricity of $\nu$: for small values of $q$ covariance matrices $\b{\Sigma}_i$ are likely to be nearly singular, while as $q \to \infty$ they become diagonal. 
Accordingly, we experimented with $q \in \{1, \ldots, 8\}$. For each $q$, the proposal distribution $\nu_*$ was a zero-mean Gaussian with diagonal covariance $\sigma^2 \b{I}_6$ for $\sigma^2$ set to the mean of the diagonal elements of the $\b{\Sigma}_i$.}
For Bayesian cubature we used the Gaussian kernel~\eqref{eq: gaussian kernel} with a length-scale $\ell = 0.8$ and the Gauss--Hermite sparse grid~\cite[Section~4.2]{Karvonen2018} with the mid-point removed.
Note that the resulting point sets are not nested for different $N$.

\subsubsection{Results}

\rev{The results of are depicted in Figure~\ref{fig:is} for one fairly representative run. Note how larger values of $q$ correspond to improved integration accuracy. It appears that for reasonably symmetric constituent distributions the proposed method works well; when the covariance matrices are nearly singular we have observed that this simple procedure can seriously fail. 
This is analogous to scenarious where standard importance sampling can be expected to fare well \cite{Robert2013}.}
Thus, based on this example at least, the symmetric change of measure technique appears to be a promising strategy to generalise the results in Theorems~\ref{thm:bc-fss},~\ref{thm:BSC-FSS} and~\ref{thm:mobc-fss}.
The largest point sets considered contained \rev{$J = 168$} fully symmetric sets, which correspond to a point set of size \rev{$N = 227,\!304$}.

\section{Discussion} \label{sec:conclusion}

There is increasing interest in the use of Bayesian methods for numerical integration \cite{Briol2017}.
Bayesian cubature methods are attractive due to analytic and theoretical tractability of the underlying Gaussian model.
However, these method are also associated with a computational cost that is cubic in the number of points, $N$, and moreover the linear systems that must be inverted are typically ill-conditioned.

The symmetry exploits developed in this work circumvent the need for large linear systems to be solved in Bayesian cubature methods.
In particular, we presented novel results for Bayes--Sard cubature \cite{Karvonen2018c} and multi-output Bayesian cubature \cite{Xi2018} that make it possible to apply these methods even for extremely large datasets or when there are many function to be integrated. 
In conjunction with the inherent robustness of the Bayes--Sard cubature method \cite{Karvonen2018c}, this results in a highly reliable probabilistic integration method that can be applied even to integrals that are relatively high-dimensional.

Three extensions of this work are highlighted:
\emph{First}, the combination of multi-output and Bayes--Sard methods appears to be a natural extension and we expect that symmetry properties can similarly be exploited for this method.
This could lead to promising procedures for integration of collections of closely related high-dimensional functions appearing in, for example, financial applications~\cite{Holtz2011}.
Similarly, our exploits should extend to the Student's $t$ based Bayesian cubatures proposed in \cite{Prueher2017}.
\emph{Second}, the investigation of optimality criteria for the symmetric change of measure technique in Section~\ref{sec:is-trick} remains to be explored.
\emph{Third}, although we focussed solely on computational aspects, the important statistical question of how to ensure Bayesian cubature methods produce output that is well-calibrated remains to some extent unresolved.\footnote{Though, see related work~\cite{Jagadeeswaran2018} on this point.}
As discussed in~\cite{Karvonen2018}, it appears that symmetry exploits do not easily lend themselves to selection of kernel parameters, for instance via cross-validation or maximisation of marginal likelihood.\footnote{An exception is for kernel amplitude parameters, which can be analytically marginalised as in Proposition~2 of~\cite{Briol2017}.}
A potential, though somewhat heuristic, way to proceed might be to exploit the concentration of measure phenomenon~\cite{Ledoux2001} or low effective dimensionality of the integrand~\cite{WangSloan2005} in order to identify a suitable data subset on which kernel parameters can be calibrated more easily or {\it a priori}.

\begin{acknowledgements}
TK was supported by the Aalto ELEC Doctoral School.
SS was supported by the Academy of Finland.
CJO was supported by the Lloyd's Register Foundation Programme on Data-Centric Engineering at the Alan Turing Institute, UK. 

This material was developed, in part, at the \textit{Prob Num 2018} workshop hosted by the Lloyd's Register Foundation programme on Data-Centric Engineering at the Alan Turing Institute, UK, and supported by the National Science Foundation, USA, under Grant DMS-1127914 to the Statistical and Applied Mathematical Sciences Institute. 
Any opinions, findings, conclusions or recommendations expressed in this material are those of the author(s) and do not necessarily reflect the views of the above-named funding bodies and research institutions.
\end{acknowledgements}


\providecommand{\BIBYu}{Yu}

\end{document}